\definecolor{RoyalBlue}{RGB}{35,32, 57}
\definecolor{lightblue}{RGB}{132,131,143}
\tikzstyle{chapterbox} = [fill=black!25!white, thick,
\tikzstyle{bluebox} = [draw=blue!50!black, fill=blue!10, very thick,
\tikzstyle{rahmen blau} = [draw=black, fill=blue!20, very thick,
\tikzstyle{graybox} = [draw=black, fill=gray!10, thick,
\tikzstyle{box} = [draw=black,  thick,
\theoremstyle{definition}
\newtheorem{definition}{Definition}[section]
\newtheorem{theorem}[definition]{Theorem}
\newtheorem{corollary}[definition]{Corollary}
\newtheorem{lemma}[definition]{Lemma}
\theoremstyle{definition}
\newtheorem{remark}[definition]{Remark}
\theoremstyle{definition}
\newcommand{\mylog}{\overset{\swarrow}{\log}} 
\newcommand{\mmysqrt}{\sqrt[\rightarrow]} 
\renewcommand{\Im}{\mathrm{Im}} 
\renewcommand{\Re}{\mathrm{Re}} 
\newcommand{\x}{\boldsymbol{x}}
\newcommand{\go}{\mathrm{go}}
\newcommand{\diff}{\mathrm{diff}}
\newcommand{\iin}{\mathrm{in}}
\newcommand{\ssc}{\mathrm{sc}}
\renewcommand{\iint}{\int\hspace{-.2cm}\int}
\newcommand{\D}{\mathcal{D}}
\renewcommand{\S}{\mathcal{S}}	
\newcommand{\LHP}{\mathrm{LHP}}
\newcommand{\UHP}{\mathrm{UHP}}
\newcommand\numberthis{\addtocounter{equation}{1}\tag{\theequation}} 
\numberwithin{equation}{section} 
\title{Diffraction by a Right-Angled No-Contrast Penetrable Wedge Revisited: A Double Wiener-Hopf Approach}
\date{\hfill }
\author{Valentin D. Kunz\thanks{valentin.kunz@manchester.ac.uk} \ and  \ Raphael C. Assier\thanks{raphael.assier@manchester.ac.uk}  \\ \footnotesize The University of Manchester, Department of Mathematics,  Oxford Road, Manchester, M13 9PL, UK}
\newcommand{\RED}{} 
\begin{document}

\maketitle


\begin{abstract}
	In this paper, we revisit Radlow's innovative approach to diffraction by a penetrable wedge by means of a double Wiener-Hopf technique. We provide a constructive way of obtaining his ansatz and give yet another reason for why his ansatz cannot be the true solution to the diffraction problem at hand. The two-complex-variable Wiener-Hopf equation is reduced to a system of two equations, one of which contains Radlow's ansatz plus some correction term consisting of an explicitly known integral operator applied to a yet unknown function,  whereas the other equation, the compatibility equation, governs the behaviour of this unknown function. 
\end{abstract}

\section{Introduction}
Although diffraction is a well-known phenomenon with a rigorous mathematical theory that emerged in the late 19th century (Sommerfeld, Poincar\'e), many canonical problems still remain unsolved in the sense that no clear analytical solution has been found for them. Here `canonical' refers to problems where the scatterer's geometry is simple but possibly exhibits some singularities making the seemingly easy scattering problem challenging to solve. One of these unsolved problems is the diffraction by a penetrable wedge, that is by a wedge-shaped scatterer made of a material with acoustic (or electromagnetic) properties different from those of the ambient medium. Wedge diffraction problems are of great importance to mathematical, physical, and engineering sciences as they represent one of the building blocks of the geometrical theory of diffraction (GTD, \cite{Keller}). For example, gaining a better understanding of penetrable wedge diffraction is expected to improve numerical methods for high frequency penetrable convex polygon diffraction, see \cite{GrothEtAl.1, GrothEtAl.2}, and has applications in the scattering of light by atmospheric particles such as ice crystals \cite{Baran} which directly feeds into climate change models \cite{SmithEtAl.}.  \\

\emph{Wedge diffraction problems: an overview.} \ Soon after providing his solution to the half-plane problem \cite{Sommerfeld1, Sommerfeld2}, Sommerfeld managed to solve the more general problem of diffraction by a non-penetrable wedge with opening angle $q\pi, \ q \in \mathbb{Q}$ in 1901 (c.f.\! \cite{Sommerfeld3} end of Chapter 5) using the method of Sommerfeld surfaces (see also \cite{AssierShanin3} for an overview and more recent applications of this method). Unfortunately, it has thus far not been possible to generalise this technique to the penetrable wedge and during the past century, new methods have been developed, not only for penetrable wedges but for diffraction problems in general. In particular, many methods including the Sommerfeld-Malyuzhinets method (c.f.\! \cite{Babich}, \cite{Malyuzhinets}), the Wiener-Hopf technique (c.f.\! \RED{\cite{BelinskiyEtAl1973}}, \cite{LawrieAbrahams}, \cite{Noble}), and the Kontorovich-Lebedev
transform approach (c.f.\! \cite{KontorovichLebedev}) have been developed for non-penetrable wedge diffraction; we refer to \cite{Nethercote1} for a review of these and other methods. Moreover, some of these methods have been helpful in gaining a better understanding of penetrable wedge diffraction. Indeed, 
in 2011, Daniele and Lombardi \cite{DanieleLombardi} employed the Wiener-Hopf technique for the isotropic penetrable wedge problem to obtain a system of four Fredholm integral equations which is then solved numerically using quadrature schemes.

Other innovative approaches suitable for high contrast penetrable wedge problems were given by Lyalinov \cite{Lyalinov} and, more recently, by Nethercote et al. \cite{Nethercote2}. In \cite{Lyalinov} Lyalinov uses the Sommerfeld–Malyuzhinets technique to obtain a system of two coupled Malyuzhinets equations which were solved approximately, giving the leading order far-field behaviour, whereas in \cite{Nethercote2} Nethercote et al. combine the Wiener-Hopf and Sommerfeld-Malyuzhinets method. In \cite{Nethercote2} a solution to the penetrable wedge is given as an infinite series of impenetrable wedge problems. Each of these impenetrable wedge problems is   solved exactly, and the resulting infinite series for the penetrable wedge can be evaluated rapidly and efficiently using asymptotic and numerical methods. 

It is important to note that there have been many other approaches as well (this list is, however, by no means exhaustive):
In 1998, Budaev and Bogy looked at finding the pressure field of acoustic wave diffraction by two penetrable wedges 
using the Sommerfeld-Malyuzhinets technique, resulting in a system of eight singular integral equations of Fredholm type (see \cite{Budaev1}) which is solved in \cite{Budaev2} by Neumann series assuming the contrast is close to unity and the wedge's opening angle is small. The following year, Rawlins considered the case of similar wave numbers $k_1 \approx k_2 $ in the electromagnetic setting (dielectric wedge) and  used the Kontorovich-Lebedev transform to create a system of Fredholm integral equations which were solved iteratively to obtain a first order approximations of the diffracted field \cite{Rawlins}.

There have also been some approaches using simple layer potential theory, as discussed in \cite{CroisilleLebeau}, which were employed in 2008 by Babich and Mokeeva. In \cite{BabichMokeeva} they showed that the problem of diffraction by a penetrable wedge has a unique solution and later, in 2012, developed a numerical solution of those simple layer potentials \cite{BabichEtAl.}. \\

\emph{Complex analysis in several variables: a new ansatz.} Whenever any of the previously mentioned methods employed complex analysis, these were one dimensional techniques. Surprisingly, using two dimensional complex analysis, there seems to be a rather straight-forward method of getting a Wiener-Hopf equation. As in his work for the 3D diffraction by a quarter-plane \cite{Radlow1}, Radlow obtained a Wiener-Hopf equation in two complex variables for the 2D right-angled no-contrast penetrable wedge \cite{Radlow2}. His simple yet innovative idea was to use two dimensional Laplace transforms to integrate over the scatterer and thus `capture' the scatterer's geometry as was already done for the half-plane problem using the one dimensional Laplace transform \cite{Noble}. These functional equations are perfectly valid, however, the closed form solutions thus found by Radlow for the quarter-plane and penetrable wedge diffraction problems turned out to be erroneous as they led to the wrong type of near-field behaviour \cite{Meister, KrautLehmann}. Nonetheless, solving these functional equations in several complex variables would be a tremendous achievement in diffraction theory. Unfortunately, the solutions in \cite{Radlow1} and \cite{Radlow2} were not given constructively making it difficult to understand the reasoning behind Radlow's work and pinpoint where he went wrong. Recently, Assier and Abrahams \cite{AssierAbrahams1} revisited Radlow's approach, giving a constructive procedure to obtain his quarter-plane ansatz \emph{plus some correction term}, while Assier and Shanin studied the analyticity properties of the unknowns of Radlow's quarter-plane Wiener-Hopf equation \cite{AssierShanin}. Although the Wiener-Hopf equation remains unsolved, the simpler problem of diffraction by a quarter-plane without incident wave (i.e. a source located at the quarter-plane's tip) has been solved using these novel complex analysis methods \cite{AssierShanin2}, confirming their usefulness. In the present work we will show that the method of \cite{AssierAbrahams1} can also be applied to the right-angled no-contrast penetrable wedge problem. \\


\emph{Aim and plan of the article.}  In the present work
we revisit Radlow's approach \cite{Radlow2} in the spirit of \cite{AssierAbrahams1}. In particular, we provide a constructive procedure of obtaining Radlow's solution \emph{plus some correction term.}  This provides yet another way of showing that Radlow's solution was erroneous. However, the extra term contains a complicated integral operator applied to a yet unknown function. Fortunately, we also obtain a \emph{compatibility equation} involving only the extra term's unknown function and although it has thus far not been possible to solve this equation exactly (which would then provide a closed form solution to the diffraction problem), we strongly believe that it can be employed to accurately test approximations, c.f.\! \cite{AssierAbrahams2}, which will be the subject of future work. We will focus on the special case of a right-angled no-contrast penetrable wedge. 

In Section \ref{Section2}, the diffraction problem is formulated in physical space and thereafter transformed into (two complex dimensional) Fourier space resulting in the problem's Wiener-Hopf equation. In Section \ref{Section3}, the machinery required to work with this equation is introduced and the functional equation's kernel is factorised; we will employ the method of phase portraits (see \cite{Wegert}) to visualise functions of complex variables, which often provides a visually convincing method of verifying results that are tedious to prove otherwise. In Section \ref{Section4} we apply the factorisation techniques developed by Assier and Abrahams in \cite{AssierAbrahams1} to derive a set of equations linking the unknowns of the functional problem. The first equation involves Radlow's solution plus some additional correction term while the second equation, the compatibility equation, may provide a way of finding/approximating this unknown correction term. Finally, we compare our results with the ones found for the quarter-plane problem.


\section{Wiener Hopf equation for the penetrable wedge}\label{Section2}

\subsection{Problem formulation}\label{sec:ProblemFormulation}
We are considering the problem of diffraction of a plane wave \RED{$\phi_{\iin}$} incident on an infinite, right-angled, penetrable wedge (PW) given by
\[
\text{PW} = \{(x_1,x_2) \in \mathbb{R}^2| \ x_1 \geq 0, x_2 \geq 0 \},
\]
see figure \ref{fig:transparentWedge}. We assume transparency of the wedge and thus expect a scattered field \RED{$\phi_{\ssc}$} in $\mathbb{R}^2 \setminus \text{PW}$ and a transmitted field \RED{$\psi$} in PW (c.f.\! figure \ref{fig:transparentWedge}, left).
As usual in scattering problems we assume time-harmonicity with the $e^{-i\omega t}$ convention, where $\omega$ is the (angular) frequency. The time dependence is henceforth suppressed and the wave-fields'  dynamics are \RED{therefore} described by \RED{two} Helmholtz equation\RED{s}. \RED{Moreover, suppressing time harmonicity, the incident plane wave \RED{(only supported within $\mathbb{R}^2 \setminus \text{PW}$)} is given by 
	\[
		\phi_{\iin}(\x) = e^{i \boldsymbol{k}_1 \cdot \x},
\]}where $\boldsymbol{k}_1\RED{\in \mathbb{R}^2}$ is the wave vector and  $\x = (x_1,x_2)\RED{\in \mathbb{R}^2}$ (this notation will be used throughout the article).
\RED{Let us focus on the acoustic setting  of sound propagation through a fluid or gas (the electromagnetic setting is briefly discussed in Remark \ref{rem:EM}). Then the field $\phi(\x)$ given by 
\[
\phi(\x) = \phi_{\ssc}(\x) + \phi_{\iin}(\x)
\]
represents the total pressure field in $\mathbb{R}^2\setminus\text{PW}$, $\psi$ represents the total pressure field in $\text{PW}$, and the wave vector  $\boldsymbol{k}_1$ satisfies $|\boldsymbol{k}_1| = k_1$ for the wave number $k_1 = \omega/c_1$, where $c_1$ is the speed of sound relative to the medium  in $\mathbb{R}^2 \setminus\text{PW}$.
}

\RED{Crucial to the present work is that we are describing a \emph{no-contrast penetrable wedge} meaning that the density $\rho_1$ of the medium in $\mathbb{R}^2 \setminus \text{PW}$ (at rest) is the same as density $\rho_2$ of the medium in $\text{PW}$ (at rest). In particular, the \emph{contrast parameter} $\lambda$ which is given by 
\[
\lambda = \frac{\rho_1}{\rho_2}
\]
satisfies
\[
\lambda = 1.
\]	
However, the wave numbers $k_1=\omega/c_1$ and $k_2=\omega/c_2$ inside and outside PW respectively are different even though $\rho_1 = \rho_2$, since the other media properties, the bulk moduli (c.f. \cite{KinslerEtAl}), defining the speeds of sound $c_1$ and $c_2$ are assumed to be different.}

The boundary value problem at hand is then described by equations \eqref{eq:1.1}--\eqref{eq:1.6} below.
\begin{figure}[t]
	\centering
	\includegraphics[width=\textwidth]{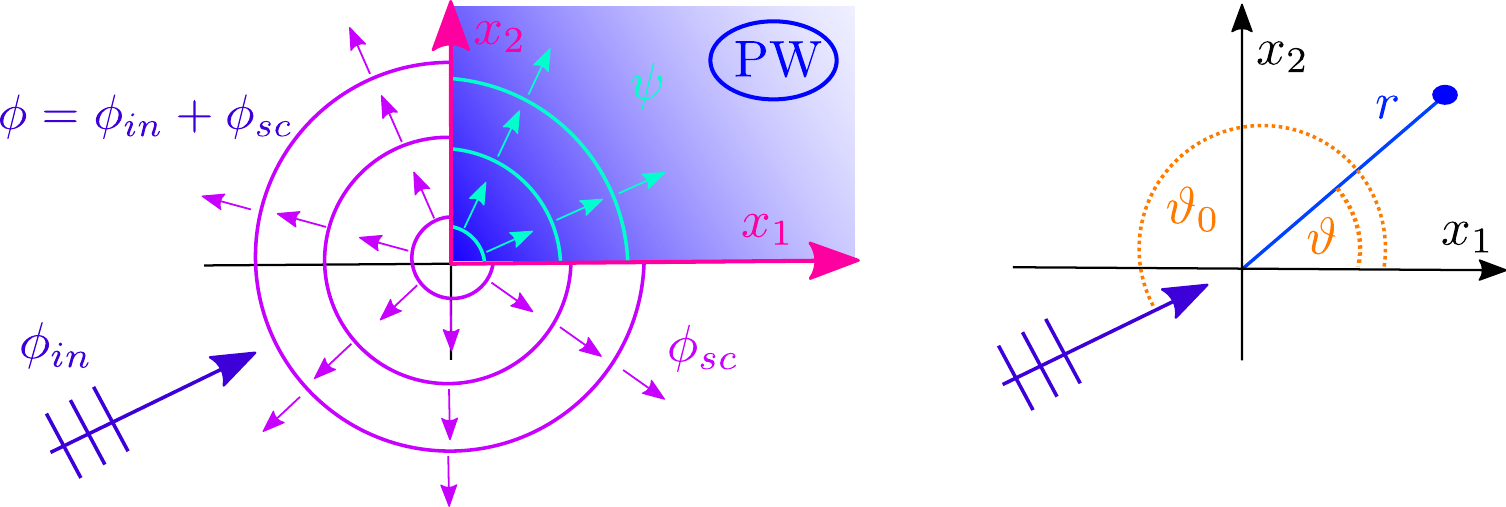}
	\vspace{-.25cm}
	\caption{\small Left: Illustration of the problem described by equations \eqref{eq:1.1}--\eqref{eq:1.6}. The scatterer i.e the penetrable wedge is shown in blue with edges in magenta. Right: Polar coordinate system and incident angle $\vartheta_0$ of $\phi_{\iin}$.} 
	\label{fig:transparentWedge}
\end{figure}
\begin{align}
	&	\Delta \phi + k^2_1 \phi =0 \ \text{in} \  \mathbb{R}^2 \setminus \text{PW}, \label{eq:1.1} \\
	&	\Delta \psi + k^2_2 \psi =0 \ \text{in} \  \text{PW}, \label{eq:1.2} \\[1em]
	&	\phi(0^-, x_2>0) \ \ \ \ = \psi(0^+, x_2>0), \label{eq:1.3}\\
	&   \phi(x_1>0, 0^-) \ \ \ \ = \psi(x_1>0, 0^+), \label{eq:1.4}\\
	&  \partial_{x_1} \phi(0^-, x_2>0) = \partial_{x_1} \psi(0^+, x_2>0), \label{eq:1.5} \\
	&  \partial_{x_2} \phi(x_1>0, 0^-) =  \partial_{x_2} \psi(x_1>0, 0^+).\label{eq:1.6}	
\end{align}
Equations \eqref{eq:1.1} and \eqref{eq:1.2} are the problem's governing equations, describing the fields' dynamics, whereas the boundary conditions \eqref{eq:1.3}--\eqref{eq:1.6} impose continuity of the fields and their normal derivatives at the wedge's boundary. 
\RED{ 
	
	\begin{remark}[The electromagnetic setting]\label{rem:EM}
		Equations \eqref{eq:1.3}--\eqref{eq:1.6} also model the diffraction of an $\boldsymbol{E}$-polarised (resp.\! $\boldsymbol{H}$-polarised) electromagnetic wave incident on a right-angled no-contrast penetrable wedge, where $\boldsymbol{E}$ is the electric field (resp.\! $\boldsymbol{H}$ is the magnetic field). Here $\phi$ corresponds to the total $\boldsymbol{E}$ (resp.\! $\boldsymbol{H}$) field in $\mathbb{R}^2 \setminus \text{PW}$ whereas $\psi$ corresponds to the total $\boldsymbol{E}$ (resp.\! $\boldsymbol{H}$) field in $\text{PW}$ (c.f. \cite{KrautLehmann}, \cite{Nethercote2}, and \cite{Radlow2}). Here, when describing the diffraction of the polarised electric (resp.\! magnetic) field, the assumption that the contrast parameter $\lambda$ satisfies $\lambda=1$ means that the magnetic permeabilities $\mu_{1}$ and $\mu_2$ (resp.\! electric permittivities $\epsilon_{1}$ and $\epsilon_2$) of the medium in $\mathbb{R}^2 \setminus \text{PW}$ and PW respectively satisfy $\mu_1=\mu_2$ (resp.\! $\epsilon_1 = \epsilon_2$). Since in the electromagnetic setting, the wave numbers are given by $k_j = \omega \sqrt{\mu_j \epsilon_j}$ 
		we must have $\epsilon_1 \neq \epsilon_2$ (resp.\! $\mu_1 \neq \mu_2$) for the wave numbers $k_1$ and $k_2$ to be different.
	\end{remark}
}
Now, introducing polar coordinates $(r,\vartheta)$ (c.f.\!  figure \ref{fig:transparentWedge}, right) we can rewrite the incident wave vector $\boldsymbol{k}_1 = -k_1(\cos(\vartheta_0),\sin(\vartheta_0))$ where $\vartheta_0$ is the incident angle. The incident wave can then be rewritten as
\begin{align*}
	\phi_{\iin} = e^{-i(\RED{\mathfrak{a}_1}x_1 + \RED{\mathfrak{a}_2}x_2)} \numberthis \label{eq:IncidentRewritten}
\end{align*}

with \begin{align}
	\RED{\mathfrak{a}_1} = k_1 \cos(\vartheta_0) \RED{\text{ and }} \RED{\mathfrak{a}_2} = k_1 \sin(\vartheta_0). \label{eq.a12Def}
\end{align}
 Henceforth, we assume \RED{$\text{Im}(k_{1}) >0$ and $\Im(k_2) >0$}.  Later on, this condition may be waived by considering the \RED{limits $\Im(k_{1,2}) \to 0$} (see Section \ref{sec:VanishingImPart}). Moreover, we restrict $\vartheta_0 \in (\pi, \frac{3 \pi}{2})$ and $\Re(k_{1,2}) > 0$, so $\Im(\RED{\mathfrak{a}_{1,2}}) < 0$ and $\Re(\RED{\mathfrak{a}_{1,2}})<0$. \RED{Since, as mentioned in the beginning of Section \ref{sec:ProblemFormulation}, we have assumed time harmonicity with the $e^{-i\omega t}$ convention, this corresponds to the damping/absorption of waves.}

\begin{remark}[the general case]
	The situation is more complicated if we allow other incident angles $\vartheta_0$ since then the sign of $\Im(\RED{\mathfrak{a}_{1}})$ and/or $\Im(\RED{\mathfrak{a}_2})$ changes. This technical difficulty can be dealt with by viewing $\RED{\mathfrak{a}_{1,2}}$ as  independent parameters and impose	\RED{$-\Im(\RED{\mathfrak{a}_{1,2}}) >0$}, i.e. give $\RED{\mathfrak{a}_{1,2}}$ an artificial negative imaginary part \emph{irrespective of incident angle and wave number}. Again, once the solution has been obtained, we may take the limit \RED{$\Im(\RED{\mathfrak{a}_{1,2}}) \to 0$}. 
\end{remark}

Finally, 
it is necessary to impose Meixner conditions on the field, ensuring boundedness of energy near the wedge's tip $\x=(0,0)$. That is, for arbitrarily small $\varepsilon > 0$, the following energy integrals need to be finite:
\begin{align}
	&\int_{0}^{\pi/2} \int_{0}^{\varepsilon} r \left(|\nabla \psi|^2 + |\psi|^2\right)dr d \vartheta < \infty,  \label{eq.Edge1} \\
	&\int_{\pi/2}^{2 \pi} \int_{0}^{\varepsilon} r \left(|\nabla \phi|^2 + |\phi|^2\right)dr d \vartheta < \infty. \label{eq.Edge2}
\end{align}	
\RED{Now,} approximating the Helmholtz equation by Laplace's equation near the tip and proposing a separation of variables ansatz \RED{yields a power series expression $\phi_{\ssc} = \sum_{n=1}^{\infty} \left(A_{\nu_n} \sin (\nu_n \vartheta)+B_{\nu_n} \cos (\nu_n \vartheta)\right)r^{\nu_n}$ for $\phi_{\ssc}$ and similarly for $\psi$ near the tip. Then, using \eqref{eq.Edge1}--\eqref{eq.Edge2}}  and the boundary conditions \eqref{eq:1.3}--\eqref{eq:1.6} we find 
\begin{align}
	&	\phi(r, \vartheta) = B +\left(A_1\sin(\vartheta) + B_1\cos(\vartheta)\right)r+ \mathcal{O}(r^2), 
	\ \text{as} \ r \to 0, \label{eq.2.56} \\	
	&	\psi(r, \vartheta) = B + \left(A'_1\sin(\vartheta) + B'_1 \cos(\vartheta)\right)r + \mathcal{O}(r^2), 
	\ \text{as} \ r \to 0, \label{eq.2.57}
\end{align} 
where the constants $B, A_1, A'_1, B_1$ and $B'_1$ are unknown. \RED{We refer to \cite{Babich} and \cite{Jones1964} for a more detailed discussion of this procedure.}	Equations \eqref{eq.2.56} and \eqref{eq.2.57} are the sought edge conditions. It should be noted that \RED{these} particular expressions \RED{ \eqref{eq.2.56} and \eqref{eq.2.57} are} only valid since we have chosen $\lambda =1$. 

\RED{The case of general $\lambda$ has, for instance, been considered in \cite{BabichMokeeva},  \cite{Nethercote2}, and \cite{Rawlins1977}. In \cite{BabichMokeeva} and \cite{Rawlins1977} the behaviour is given up to second order  $\phi = B + \mathcal{O}(r^{\mu})$ (similarly for $\psi$) where the exact value of $\mu >0$ is not specified, whereas in \cite{Nethercote2}, the behaviour is given up to fourth order and an explicit dispersion relation is given for determining $\mu$ (which depends on the contrast parameter $\lambda$)}. 

\begin{remark} 
	Following Radlow's ansatz we would also get $\phi, \psi \sim C \ \text{as} \ r \to 0$ for some suitable constant $C$, see \cite{Radlow2}. However, as pointed out by Kraut and Lehmann in \cite{KrautLehmann}, Radlow's ansatz leads to the wrong value i.e.\! $C \neq B$. \RED{Moreover, in \cite{Rawlins1977} Rawlins explicitly computed the value of $B$ up to second order in $k^2_1 - k^2_2$ when $k_1$ is close to $k_2$, thereby extending Kraut and Lehmann's work.}
\end{remark}

In general, in order for the problem to be well posed, the field also needs to satisfy a radiation condition: The scattered field should be outgoing in the far-field. That is, there are no sources other than the incident wave at infinity. Due to the wavenumbers' positive imaginary part, this is automatically satisfied and by the limiting absorption principle, the radiation condition also holds in the limit $\Im(k_{1,2}) \to 0$. See \RED{\cite{BabichMokeeva},} \cite{Nethercote2} for more information on the radiation condition for penetrable wedges. 

\RED{To conclude this section, we note that specifying the behaviour of the fields near the wedge's tip and at infinity is required to guarantee uniqueness of the solution to the problem described by equations \eqref{eq:1.1}--\eqref{eq:1.6}, see \cite{BabichMokeeva}.}

\subsection{Transformation in Fourier space} 

In this section, the boundary value problem described by \eqref{eq:1.1}--\eqref{eq:1.6} is transformed into Fourier space and the corresponding functional equation is found. Let  $Q_n, \ n=1,2,3,4$ denote the $n$th quadrant of the $(x_1,x_2)$ plane given by
\begin{align*}
	\text{PW} =	& Q_1 = \{ \x \in \mathbb{R}^2| x_1\geq0, \ x_2\geq0 \}, \	Q_2 = \{ \x \in \mathbb{R}^2| x_1\leq0, \ x_2\geq0 \}, \\
	& 	Q_3 = \{ \x \in \mathbb{R}^2| x_1\leq0, \ x_2\leq0 \}, \ Q_4 = \{ \x \in \mathbb{R}^2| x_1\geq0, \ x_2\leq0 \}.
\end{align*}
To derive the problem's functional equation and to keep consistency with recent work on several complex variable methods applied to diffraction problems (c.f.\! \cite{AssierAbrahams1, AssierShanin}) we define:

\begin{definition}[One-quarter Fourier Transform]\label{def. 1/4FT}
	The one-quarter Fourier transform of a function $u$ is given by 
	\begin{align}
		U_{1/4}(\boldsymbol{\alpha})=\mathcal{F}_{1/4}[u](\boldsymbol{\alpha}) = \iint_{Q_1} u(\x) e^{i \boldsymbol{\alpha} \cdot \x}
		d\x. 
	\end{align}
\end{definition}

\begin{definition}[Three-quarter Fourier Transform]\label{def. 3/4FT}
	The three-quarter Fourier transform of a function $u$ is given by
	\begin{align}
		U_{3/4}(\boldsymbol{\alpha}) = \mathcal{F}_{3/4}[u](\boldsymbol{\alpha}) 
		= \iint_{\cup_{i=2}^4 Q_i} u(\x)e^{i \boldsymbol{\alpha} \cdot \x}
		d\x.
	\end{align}	
\end{definition}

Here, we have $\boldsymbol{\alpha} = (\alpha_1,\alpha_2) \in \mathbb{C}^2$ and we write $d\x$ for $dx_1dx_2$. More details as to where $\boldsymbol{\alpha}$ is permitted to go in $\mathbb{C}^2$ will be given in Section \ref{Cha:DomainsOfAnalyticty}. \RED{Recall the definitions of $\mathfrak{a}_1$ and $\mathfrak{a_2}$ given in \eqref{eq:IncidentRewritten}--\eqref{eq.a12Def}.}
Now, apply $\mathcal{F}_{1/4}$ to \eqref{eq:1.1} and $\mathcal{F}_{3/4}$ to \eqref{eq:1.2}. Using the boundary conditions \eqref{eq:1.3}--\eqref{eq:1.6} and setting
\begin{alignat}{3}
	& \Phi_{3/4}(\boldsymbol{\alpha})  =\mathcal{F}_{3/4}[\phi_{\ssc}], \   &&\Psi_{1/4}(\boldsymbol{\alpha})  =\mathcal{F}_{1/4}[\psi] , \label{eq.PhiDef}\\
	& P(\boldsymbol{\alpha}) = \frac{1}{(\alpha_1 -\RED{\mathfrak{a}_1})(\alpha_2 -\RED{\mathfrak{a}_2})}, \quad 	&& K(\boldsymbol{\alpha}) = \frac{k^2_2 - \alpha^2_1 - \alpha^2_2}{k^2_1 - \alpha^2_1 - \alpha^2_2}, \label{eq.KPDef}
\end{alignat} we find the following Wiener-Hopf equation (see Appendix \ref{Appendix: WH-Details} for the calculation):
				\begin{align*}
					- K(\boldsymbol{\alpha})\Psi_{1/4}(\boldsymbol{\alpha}) = \Phi_{3/4}(\boldsymbol{\alpha}) + P(\boldsymbol{\alpha}). \label{eq.1.22}
					\numberthis
				\end{align*}	

\begin{remark}[comparison with quarter-plane] \label{Remark:ComparisonWithQP}
	Note that \eqref{eq.1.22} is almost identical to the Wiener-Hopf equation for the quarter-plane given in \cite{AssierAbrahams1}. In fact, setting $\tilde{\Psi}_{1/4}= - \Psi_{1/4}$ we can rewrite \eqref{eq.1.22} as \begin{align}K(\boldsymbol{\alpha})\tilde{\Psi}_{1/4}(\boldsymbol{\alpha})= \Phi_{3/4} + P(\boldsymbol{\alpha}) \label{eq.quarter-planeWienerHopf}
	\end{align}
	which, formally, is the same Wiener-Hopf equation as for the quarter-plane (that is, \eqref{eq.quarter-planeWienerHopf} and the Wiener-Hopf equation in \cite{AssierAbrahams1} only differ by the definition of the kernel $K$\RED{, which for the quarter-plane is given by $K(\boldsymbol{\alpha}) = 1/\sqrt{k^2-\alpha^2_1-\alpha^2_2}$, where $k$ is the (only) wavenumber of the quarter-plane problem}).
\end{remark} 

\subsection{Domains of analyticity}\label{Cha:DomainsOfAnalyticty}

Whilst we have, formally, found a functional equation for the diffraction problem at hand, the domain in $\mathbb{C}^2$ where this equation is valid has not yet been discussed. This is the aim of the present section. 


\subsubsection{Set notations}\label{sec:SetNotations}

Before we begin discussing equation \eqref{eq.1.22}'s validity, let us introduce some notation which will be used extensively throughout the remainder of this article.  For any $\kappa_1 < \kappa_2 \in \mathbb{R}$ we define (see figure \ref{fig:UHPLHPS1}) 
\begin{gather*}
	\mathrm{UHP}(\kappa_{\RED{j}}) = \{z \in \mathbb{C}| \mathrm{Im}(z) > \kappa_{\RED{j}}\}, \ \RED{j=1,2;} \ \	\mathrm{LHP}(\kappa_{\RED{j}})  = \{z \in \mathbb{C}| \mathrm{Im}(z) < \kappa_{\RED{j}}\}, \ \RED{j=1,2;} \\
	\S(\kappa_1,\kappa_{2}) = \{z \in \mathbb{C}| \kappa_{1} < \mathrm{Im}(z) < \kappa_{2}\}.
\end{gather*}
Visually speaking, the upper half plane $\mathrm{UHP}(\kappa_{j})$ (resp. lower half plane $\mathrm{LHP}(\kappa_{j})$) consists of all points $z \in \mathbb{C}$ lying above (resp. below) the line given by $\{\text{Im}(z)=\kappa_j\}$ whereas the strip $\S(\kappa_2,\kappa_1)$ consists of all points between the lines $\{\text{Im}(z)=\kappa_2\}$ and $\{\text{Im}(z)=\kappa_1\}$. \RED{In particular, $\S(\kappa_2,\kappa_1) = \UHP(\kappa_2) \cap \LHP(\kappa_1)$.}
\begin{figure}[thbp]
	\centering
	\includegraphics[width=\textwidth]{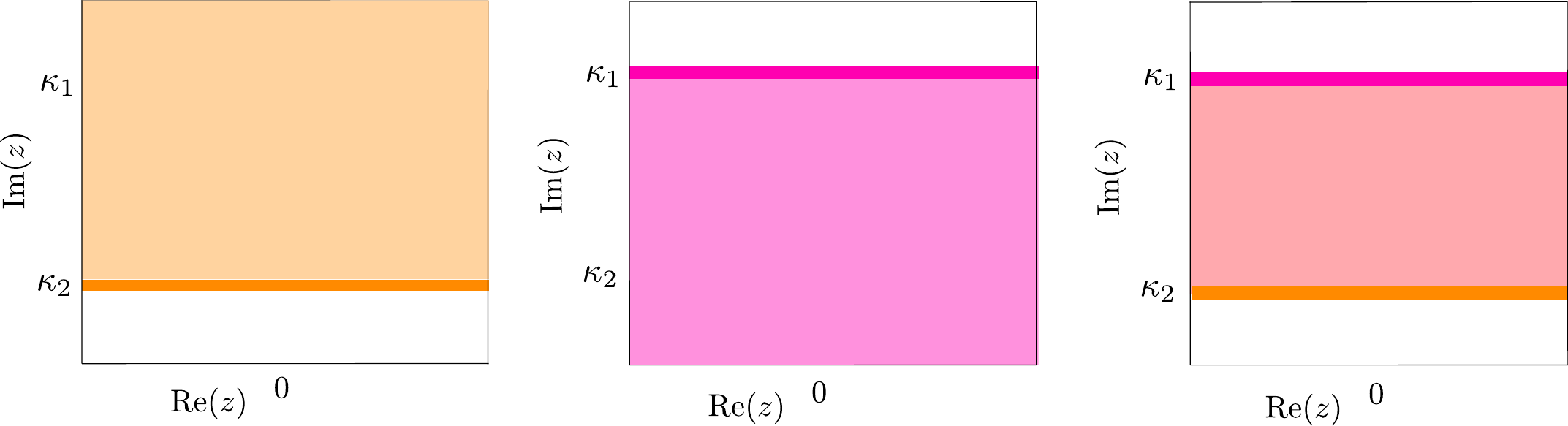}
	\vspace{-.25cm}
	\caption{Half planes $\UHP(\kappa_2)$ (left), $\LHP(\kappa_1)$ (middle), and strip $\S(\kappa_2,\kappa_1)$ (right).}
	\label{fig:UHPLHPS1}
\end{figure} 
Moreover, for any $\kappa_{1,2} \in \mathbb{R}$ (i.e.\! we now also allow $\kappa_1 \geq \kappa_2$) we define:
\begin{align*}
	&	\D_{++}(\kappa_1,\kappa_2) = \mathrm{UHP}(\kappa_1) \times \mathrm{UHP}(\kappa_2), \ \ \D_{-+}(\kappa_1,\kappa_2) = \mathrm{LHP}(\kappa_1) 
	\times \mathrm{UHP}(\kappa_2), \  \\
	&\D_{--}(\kappa_1,\kappa_2) = \mathrm{LHP}(\kappa_1) \times \mathrm{LHP}(\kappa_2), \  \	\D_{+-}(\kappa_1,\kappa_2) = \mathrm{UHP}(\kappa_1) \times \mathrm{LHP}(\kappa_2).
\end{align*}
\RED{In particular, $\D_{++}(\kappa_1,\kappa_2), \D_{-+}(\kappa_1,\kappa_2), \D_{--}(\kappa_1,\kappa_2),$ and $\D_{+-}(\kappa_1,\kappa_2)$ are (open) subsets of $\mathbb{C}^2$ i.e. if $(\alpha_1, \alpha_2) \in \D_{++}(\kappa_1,\kappa_2)$, say, then $\alpha_1 \in \mathrm{UHP}(\kappa_1)$ and $\alpha_2 \in \mathrm{UHP}(\kappa_2)$.}
\subsubsection{Function notations}\label{sec:FunctionNotations}

Using the sets defined above, we now introduce the following notation for functions:

\begin{definition}\label{def:++,--,+-,circFunctions}
	Let $U:D \to \mathbb{C}, \ D \subset \mathbb{C}^2$. We then call $U$ a $++,+-,--,$ or $-+$ function if, and only if, there are some $\kappa_{1,2}\in\mathbb{R}$ such that $U$ is analytic in $\D_{++}(\kappa_1,\kappa_2)$, $\D_{+-}(\kappa_1,\kappa_2)$, $\D_{--}(\kappa_1,\kappa_2)$, or $\D_{-+}(\kappa_1,\kappa_2)$. In these respective cases, we also write 
	\begin{align*}
		U=U_{++}, U_{+-}, U_{--},  U_{-+}.
	\end{align*}
	Moreover, if $U$ is analytic in $\S(\kappa'_1,\kappa'_2) \times \mathrm{UHP}(\kappa_2)$  for some $\kappa'_1 < \kappa'_2$ we write 
	\begin{align*}
		U=U_{\circ+}
	\end{align*}
	and say that $U$ is a $\circ+$ function. Analogously, the concepts of $\circ-, + \circ$ and $- \circ$ functions are defined and in these respective cases we write
	\begin{align*}
		U = U_{\circ-}, \ U = U_{+\circ}, \ \text{and} \ U= U_{- \circ}.
	\end{align*}
\end{definition}

\subsubsection{Domains of analyticity}\label{sec:Domains}

Recall that for half-range Fourier transforms, we have:

\begin{theorem}\label{Thm:FTAsympt}
	Let $f:\mathbb{R} \to \mathbb{C}$ satisfy  $|f(x)| < A e^{\RED{b_0} x}$ as $x \to \infty$ for some \RED{constants} 
	$b_0 \in \mathbb{R},\ \RED{A} \in \RED{[0,\infty)}$. Then the function $F_+(\alpha)$ defined by
	\begin{align*}
		F_+(\alpha)=\int_{0}^{\infty}f(x)e^{i \alpha x}dx \ 
	\end{align*}
	is analytic for all $\alpha \in \UHP(\RED{b_0})$.
	If, on the other hand, we have  $|f(x)| < A e^{\RED{b_0} x}$ as $x \to -\infty$ for some \RED{(maybe different) constants} 
	$\RED{b_0 \in \mathbb{R}, \ A} \in \RED{[0,\infty)}$  then the function $F_-(\alpha)$ defined by
	\begin{align*}
		F_-(\alpha)=\int_{-\infty}^{0}f(x)e^{i \alpha x}dx 
	\end{align*}	
	is analytic for all $\alpha \in \LHP(\RED{b_0})$. \RED{Note that the specific value of the constant $A$ is irrelevant for the analyticity behaviour of $F_+(\alpha)$ and $F_-(\alpha)$ respectively.}
\end{theorem}

These are well-known results and we refer to \cite{Noble} for a more detailed discussion. Now, using geometrical optics and writing
$u=u_{\go} + u_{\diff}$ for $u=\phi_{\ssc} + \phi_{\iin}$ or $u=\psi$, we know that in the far field the wave $u_{\go}$, consisting of the incident, reflected, and transmitted plane waves in their respective domains, will always dominate the diffracted field (since $u_{\diff}$ is \RED{an exponentially} decaying cylindrical wave). Recall that $\Im(\RED{\mathfrak{a}_1}) = \RED{\Im(k_1)} \cos(\vartheta_0)$ and  $\Im(\RED{\mathfrak{a}_2})= \RED{\Im(k_2)} \sin(\vartheta_0)$, so setting 
\begin{align}
	\delta = \min\{\RED{\Im(k_1)} |\cos(\vartheta_0)|, \RED{\Im(k_2)} |\sin(\vartheta_0)|\} \label{eq.deltaDEF}
\end{align} we have $\Im(\RED{\mathfrak{a}_{1,2}}) \leq - \delta < 0 $, 
and we therefore obtain
\begin{align}
	|u_{\go}| \leq A e^{-\delta|x_1| - \delta |x_2|}  \ \text{as} \ x_1,x_2 \to \pm \infty \ \text{in} \ \mathbb{R}^2 \label{eq.u_goFarField}
\end{align}
for some  constant $A \RED{\in [0,\infty)}$ \RED{(again, the exact value of $A$ does not matter)}. Moreover, it can be shown that $K(\boldsymbol{\alpha})$ is analytic in $\S(-\varepsilon,\varepsilon) \times \S(-\varepsilon,\varepsilon)$ for a suitable constant $\RED{\varepsilon \in (0, \delta]}$, see Lemma \ref{lemma:KpmcircDomain}.
For simplicity, let us henceforth omit a function's argument unless it is \emph{not} $\boldsymbol{\alpha}$, and let us set 
\begin{align*}
	&\D_{++} = \D_{++}(-\varepsilon,-\varepsilon), \ \D_{+-} = \D_{+-}(-\varepsilon,-\varepsilon), \\
	&\D_{--} = \D_{--}(-\varepsilon,-\varepsilon), \ \D_{-+} = \D_{-+}(-\varepsilon,-\varepsilon), \\
	&\S = \S(-\varepsilon, \varepsilon), \ \LHP = \LHP(\varepsilon), \ \UHP= \UHP(-\varepsilon).
\end{align*}
Then, applying Theorem \ref{Thm:FTAsympt} twice and using \eqref{eq.u_goFarField} we find:
\begin{alignat*}{3}
	\Psi_{++} & = \Psi_{1/4},   && \  \text{analytic on} \ \D_{++},\\
	\Phi_{-+} &  = \iint_{Q_2}\phi_{\ssc}(\x) e^{i \boldsymbol{\alpha} \x} d\x, && \  \text{analytic on}  \ \D_{-+}, \\
	\Phi_{--}&  = \iint_{Q_3}\phi_{\ssc}(\x) e^{i \boldsymbol{\alpha} \x} d\x,  && \ \text{analytic on} \ \D_{--}, \\ 	
	\Phi_{+-} &  = \iint_{Q_4}\phi_{\ssc}(\x) e^{i \boldsymbol{\alpha} \x} d\x, && \ \text{analytic on} \  \D_{+-}, \\
	P_{++} =  P & =  \frac{1}{(\alpha_1 - \RED{\mathfrak{a}_1})(\alpha_2 - \RED{\mathfrak{a}_2})}, \  && \ \text{analytic on} \ \D_{++}. 
\end{alignat*}
\RED{Note that $P=P_{++}$ is analytic in $\D_{++}$ since $\mathfrak{a}_1$ and $\mathfrak{a}_2$ are in $\LHP$.}
\RED{Now, since} \RED{by definition of $\Phi_{3/4}$ (see \eqref{eq.PhiDef}) we have} 
\begin{align}	
	\Phi_{3/4} = \Phi_{+-} + \Phi_{--} + \Phi_{-+},
\end{align}
we find:

\begin{corollary} The spectral function $\Psi_{++}$ is analytic in the region $\D_{++}$ whereas $\Phi_{3/4}$ is analytic in the region $\S \times \S$.
\end{corollary}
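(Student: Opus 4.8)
\section*{Proof proposal for the Corollary}

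The plan is to obtain both analyticity statements by applying Theorem \ref{Thm:FTAsympt} one Cartesian variable at a time to each of the constituent quarter-plane integrals, using the two-sided exponential decay \eqref{eq.u_goFarField}, and then simply intersecting the resulting product domains. Joint analyticity in $\boldsymbol{\alpha}$ follows from separate analyticity in $\alpha_1,\alpha_2$ together with local boundedness (Osgood/Hartogs), so the only genuine content is bookkeeping of the correct half-planes and strips.

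The first claim is almost immediate from the displayed list. Writing $\Psi_{++}=\Psi_{1/4}=\iint_{Q_1}\psi\, e^{i\boldsymbol{\alpha}\cdot\x}\,d\x$ as an iterated integral over the two half-lines $x_1\in(0,\infty)$ and $x_2\in(0,\infty)$, and noting that on $Q_1$ we have $|\psi|\leq A e^{-\delta x_1-\delta x_2}$ (its geometrical-optics part is bounded by \eqref{eq.u_goFarField} and its diffracted part decays exponentially since $\mathrm{Im}(k_2)=\epsilon>0$), Theorem \ref{Thm:FTAsympt} applied in $\alpha_1$ and then in $\alpha_2$ yields analyticity on $\mathrm{UHP}(-\delta)\times\mathrm{UHP}(-\delta)$. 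Because $0<\varepsilon\leq\delta$, this domain contains $\D_{++}=\mathrm{UHP}(-\varepsilon)\times\mathrm{UHP}(-\varepsilon)$, giving the first claim.

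For the second claim I would write $\Phi_{3/4}=\Phi_{+-}+\Phi_{--}+\Phi_{-+}$ and treat each summand in the same way. The decisive point, and the main conceptual hurdle, is that one must \emph{not} work with the minimal domains $\D_{+-},\D_{--},\D_{-+}$ as stated: intersecting them already forces $\alpha_1\in\mathrm{LHP}(-\varepsilon)\cap\mathrm{UHP}(-\varepsilon)=\emptyset$ (and likewise in $\alpha_2$), which would collapse the argument. Instead one exploits the full strength of \eqref{eq.u_goFarField}, namely $|\phi_{\sc}|\leq A e^{-\delta|x_1|-\delta|x_2|}$ in $\mathbb{R}^2\setminus\mathrm{PW}$, so that each quarter-plane integral is in fact analytic in a product of half-planes with boundaries at $\pm\delta$ rather than at $-\varepsilon$. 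Concretely, Theorem \ref{Thm:FTAsympt} gives $\Phi_{-+}$ analytic on $\mathrm{LHP}(\delta)\times\mathrm{UHP}(-\delta)$, $\Phi_{--}$ on $\mathrm{LHP}(\delta)\times\mathrm{LHP}(\delta)$, and $\Phi_{+-}$ on $\mathrm{UHP}(-\delta)\times\mathrm{LHP}(\delta)$.

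Intersecting these three domains, the $\alpha_1$-component is $\mathrm{LHP}(\delta)\cap\mathrm{LHP}(\delta)\cap\mathrm{UHP}(-\delta)=\S(-\delta,\delta)$ and the $\alpha_2$-component is likewise $\S(-\delta,\delta)$, so the sum $\Phi_{3/4}$ is analytic on $\S(-\delta,\delta)\times\S(-\delta,\delta)$; since $\varepsilon\leq\delta$ we get $\S\times\S=\S(-\varepsilon,\varepsilon)^2\subseteq\S(-\delta,\delta)^2$, which is the claim. As a cross-check I would rederive the same conclusion from the inclusion--exclusion splitting $\mathbf{1}_{\mathbb{R}^2\setminus Q_1}=\mathbf{1}_{\{x_1\leq0\}}+\mathbf{1}_{\{x_2\leq0\}}-\mathbf{1}_{Q_3}$, in which the two full-line integrals directly produce $\S(-\delta,\delta)$ factors and the strip structure is manifest. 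The only step needing real care, rather than a true obstacle, is justifying the two-sided decay of $\phi_{\sc}$ itself: one writes $\phi_{\sc}$ as the reflected geometrical-optics wave (bounded by \eqref{eq.u_goFarField}) plus the diffracted cylindrical wave, and uses $\mathrm{Im}(k_{1,2})=\epsilon>0$ to secure exponential decay of the latter.
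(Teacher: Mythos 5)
Your argument is correct and follows the same route as the paper: apply Theorem \ref{Thm:FTAsympt} in each Cartesian variable to the four quadrant integrals using the exponential bound \eqref{eq.u_goFarField}, and then sum the three pieces of $\Phi_{3/4}$. You are also right to insist that the three-quarter pieces are analytic on half-plane products whose boundaries reach into the strip (e.g. $\Phi_{--}$ analytic on $\LHP(\delta)\times\LHP(\delta)$ rather than merely on $\D_{--}=\D_{--}(-\varepsilon,-\varepsilon)$): the domains as literally displayed in the paper have empty pairwise intersection in each variable, and it is precisely your corrected bookkeeping, with boundaries at $\pm\delta$, that makes the intersection argument deliver analyticity of $\Phi_{3/4}$ on $\S\times\S$.
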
 

Thus, since $K$ is analytic on $\S \times \S$ and $P_{++}$ is analytic on $\D_{++}$ we find that \eqref{eq.1.22} is valid in 
$\S\times \S$. To summarise:

\begin{corollary}\label{Corollary:RewrittenWHequation}
	The Wiener-Hopf equation \eqref{eq.1.22} can be rewritten as
				\begin{align*}
						-\Psi_{++}K =\Phi_{+-} + \Phi_{--} + \Phi_{-+} + P_{++}, \label{eq.WH2}
						\numberthis
				\end{align*}				
	and is valid on $\S \times \S$. 
\end{corollary}		

\RED{Equation \eqref{eq.WH2} represents a generalization of the classical  (one complex-variable) Wiener-Hopf equation  that appears, for instance, in the diffraction by a half-plane, see \cite{Noble}.}

\section{Factorisation of $K$}\label{Section3}
\subsection{Some useful functions}\label{sec:usefulFun}
As usual in complex analysis, functions defined on the real numbers might exhibit branch points when analytically continued onto the complex plane (c.f.\! \cite{Wegert}). This leads to the function being defined not on $\mathbb{C}$ but on some Riemann surface instead. However, for the purpose of the present work, we do not need this generality and the interested reader is referred to \cite{Wegert} for a more detailed discussion of the process of analytical continuation. When instead of working on the function's Riemann surface one wants to work on $\mathbb{C}$, branch cuts have to be introduced that is, we have to introduce lines of discontinuity of our function, \RED{but there is some arbitrariness involved in the specific choice of branch cuts}. In this section, we will specify some choice of branch cut for the complex square root function as well as the complex logarithm. These specific choices are the same as in \cite{AssierAbrahams1} (for the logarithm) and \cite{AssierShanin} (for the square root function). All of the following functions play a crucial role in the factorisation of $K$. Throughout the remainder of the article, we will extensively employ the method of phase portraits to visualise a complex function's properties in the spirit of \cite{Wegert}.

Let $\log(z)$ and $\sqrt{z}$ denote the standard complex logarithm and square root used by most mathematical software (Matlab, for instance). These functions correspond to the usual real logarithm and square root respectively when restricted onto $\mathbb{R}^+$ and have a branch cut along the negative real axis \RED{(i.e. $\arg(z) \in (-\pi,\pi]$)}.

We define $\mylog(z)$ as the logarithm with a branch cut diagonally down the third quadrant, see figure \ref{fig:LogAndMylog}. Practically, $\mylog(z)$ is obtained via the relation $\mylog(z) = \log(e^{-i \pi/4} z) + i \pi /4$.
\begin{figure}[thbp]
	\centering
	\includegraphics[width=.85\textwidth]{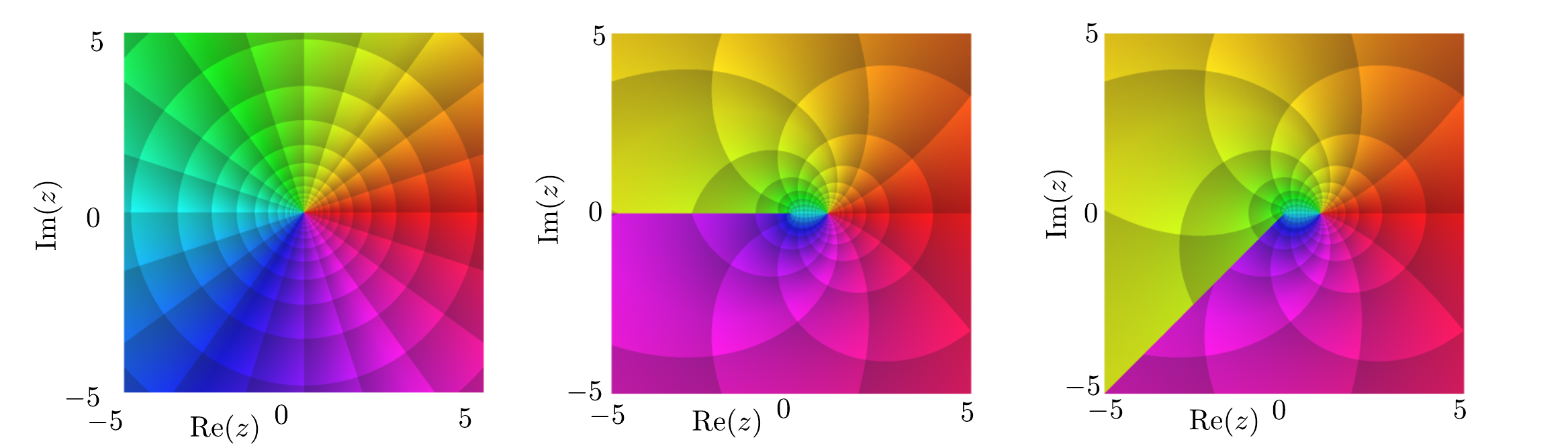}
	\vspace{-.5cm}
	\caption{Phase portrait of the functions $f(z)=z$ (left), $\log(z)$ (centre), and $\mylog(z)$ (right).}
	\label{fig:LogAndMylog}
\end{figure}

Next, we specify the choice of branch cut for the square root. Denote by $\mmysqrt{z}$ the square root function with branch cut along the positive real axis and branch subject to $\mmysqrt{-1} = i$, see figure \ref{fig:SqrtAndMysqrtKappa}. This choice of square root guarantees its imaginary part to be strictly positive everywhere except on the positive real axis (which is mapped onto the real line). Practically, $\mmysqrt{z}$ can be defined by $\mmysqrt{z} = i \sqrt{-z}$.
\begin{figure}[thbp]
	\centering
	\includegraphics[width=.85\textwidth]{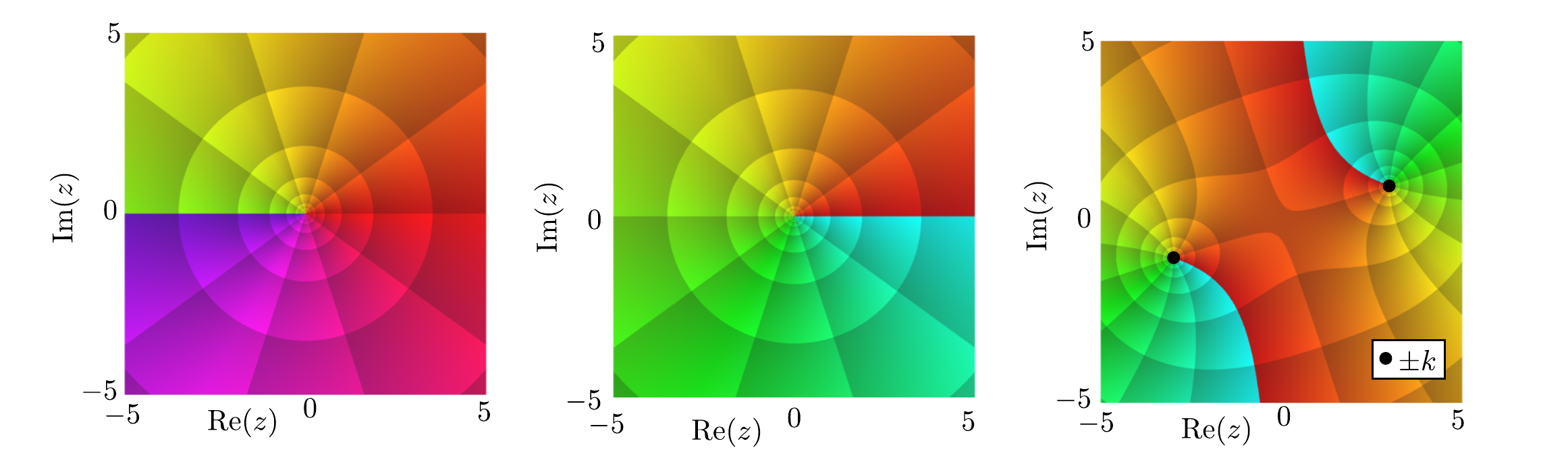}
	\vspace{-.4cm}
	\caption{Phase portraits of the functions $\sqrt{z}$ (left), $\mmysqrt{z}$ (centre), and $\kappa(k,z)$ (right) for $k=3+i$.}
	\label{fig:SqrtAndMysqrtKappa}
\end{figure}
Finally, for $k$ with $\Im(k) > 0$ and $\Re(k) > 0$ we define 
\begin{align}
	&	\kappa(k,z) = \mmysqrt{k^2 -z^2}  \label{eq.kappa}
\end{align}
which is visualised in figure \ref{fig:SqrtAndMysqrtKappa}. Due to the choice of square root, the sheet of $\kappa(k, \cdot)$'s Riemann surface is chosen such that $\kappa(k,0) = + k$.	
The function $\kappa(k,z)$ has two branch cuts, starting at $z=k$ and $z=-k$ respectively, see figure \ref{fig:SqrtAndMysqrtKappa}. Moreover, since $\mmysqrt{z}$ has strictly positive imaginary part everywhere except on its branch cut (where its imaginary part vanishes), $\kappa(k,z)$ also has strictly positive imaginary part everywhere except on its branch cuts (c.f.\! figure \ref{fig:SqrtAndMysqrtKappa}), which are mapped onto the real axis (see \cite{AssierShanin} for a more detailed discussion). 
\subsection{Factorisation in the $\alpha_1$ plane}\label{sec:alpha1K}

Recall the notation introduced in subsection \ref{sec:FunctionNotations}. 
Using $\kappa$, we can write 
\begin{align}
	K(\boldsymbol{\alpha}) = \frac{\left(\kappa(k_2,\alpha_2) + \alpha_1\right) \left(\kappa(k_2,\alpha_2) - \alpha_1\right)}{\left(\kappa(k_1,\alpha_2) + \alpha_1\right) \left(\kappa(k_1,\alpha_2) - \alpha_1\right)}. \label{eq.KFirstAlpha1Factor.}
\end{align}
Upon defining
\begin{align}
	K_{+\circ} = \frac{\kappa(k_2,\alpha_2) + \alpha_1}{\kappa(k_1,\alpha_2) + \alpha_1},  \ 
	K_{-\circ} = \frac{\kappa(k_2,\alpha_2) - \alpha_1}{\kappa(k_1,\alpha_2) - \alpha_1}, \label{eq.K+odef}
\end{align}	
and using \eqref{eq.KFirstAlpha1Factor.}, we have 
\begin{align}
	K = K_{+ \circ} K_{- \circ}. \label{eq.Kalpha1Fac}
\end{align}
The following lemma justifies the notation.
\begin{lemma}\label{lemma:KpmcircDomain}
	There exists an $\varepsilon>0$ such that $K_{+ \circ}$ and $K_{-\circ}$ are analytic in $\mathrm{UHP}(-\varepsilon) \times \S(-\varepsilon, \varepsilon)$ and $\mathrm{LHP}(\varepsilon) \times \S(-\varepsilon, \varepsilon)$ respectively. Note that this implies analyticity of $K$ in $\S(-\varepsilon,\varepsilon) \times \S(-\varepsilon,\varepsilon)$. \RED{Moreover, $K_{-\circ} \to 1$ and $K_{+\circ} \to 1$ as $|\alpha_{1,2}| \to \infty$ within these function's respective domains of analyticity.}
\end{lemma}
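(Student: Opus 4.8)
The plan is to reduce the statement to two facts about the building block $\kappa(k_j,\alpha_2)$: that it is analytic in $\alpha_2$ on a thin strip about the real axis, and that $\Im(\kappa(k_1,\alpha_2))$ is bounded below by a positive constant there. Indeed, $K_{+\circ}$ and $K_{-\circ}$ are rational in $\alpha_1$, their only possible singularity being the zero of the denominator, namely $\alpha_1=-\kappa(k_1,\alpha_2)$ for $K_{+\circ}$ and $\alpha_1=\kappa(k_1,\alpha_2)$ for $K_{-\circ}$; every other potential loss of analyticity is inherited from the $\alpha_2$-dependence of $\kappa(k_1,\cdot)$ and $\kappa(k_2,\cdot)$. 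So I would first record that $K_{+\circ}$ (resp.\ $K_{-\circ}$) fails to be analytic only where $\kappa(k_1,\cdot)$ or $\kappa(k_2,\cdot)$ meets a branch cut, or at the single pole just identified.

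Next I would locate the branch cuts of $\kappa(k_j,\cdot)$. Writing $k_j=k_R+i\epsilon$ with $k_R,\epsilon>0$ and $\alpha_2=\sigma+i\tau$, a cut occurs exactly where $k_j^2-\alpha_2^2$ lands on the branch cut of $\mmysqrt{\cdot}$, i.e.\ where $k_j^2-\alpha_2^2$ is real and nonnegative. Separating real and imaginary parts, this is the set $\sigma\tau=k_R\epsilon$ with $\sigma^2\le k_R^2$. On the branch emanating from $\alpha_2=k_j$ this forces $0<\sigma\le k_R$ and hence $\tau=k_R\epsilon/\sigma\ge\epsilon$, and symmetrically $\tau\le-\epsilon$ on the branch from $-k_j$. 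Thus both cuts of $\kappa(k_j,\cdot)$ stay in $\{|\Im(\alpha_2)|\ge\epsilon\}$, so for any fixed $\varepsilon_0$ with $0<\varepsilon_0<\epsilon$ the closed strip $\overline{\S(-\varepsilon_0,\varepsilon_0)}$ avoids them entirely, and both $\kappa(k_1,\cdot)$ and $\kappa(k_2,\cdot)$ are analytic there (which also takes care of the numerators).

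The crux, and where I expect the real work, is the uniform lower bound on $\Im(\kappa(k_1,\cdot))$. On $\overline{\S(-\varepsilon_0,\varepsilon_0)}$ the point $\kappa(k_1,\alpha_2)$ never sits on a branch cut, so by the positivity property of $\mmysqrt{\cdot}$ recorded above its imaginary part is strictly positive pointwise; the issue is uniformity. For $|\sigma|\to\infty$ with $|\tau|\le\varepsilon_0$ one has $k_1^2-\alpha_2^2\to\infty$ into the left half-plane, whence $\Im(\kappa(k_1,\alpha_2))=\Re(\sqrt{\alpha_2^2-k_1^2})\sim|\sigma|\to\infty$; combined with strict positivity and continuity on the compact complement, this yields a genuine positive minimum $m:=\min_{\overline{\S(-\varepsilon_0,\varepsilon_0)}}\Im(\kappa(k_1,\cdot))>0$.

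Finally I would set $\varepsilon=\min(\varepsilon_0,m,\delta)$, which also ensures $\varepsilon\le\delta$ as required later. On $\S(-\varepsilon,\varepsilon)\subseteq\S(-\varepsilon_0,\varepsilon_0)$ we then have $\Im(\kappa(k_1,\alpha_2))\ge m\ge\varepsilon$, so the pole $\alpha_1=-\kappa(k_1,\alpha_2)$ of $K_{+\circ}$ satisfies $\Im(\alpha_1)\le-\varepsilon$ and lies outside $\UHP(-\varepsilon)$, while the pole $\alpha_1=\kappa(k_1,\alpha_2)$ of $K_{-\circ}$ has $\Im(\alpha_1)\ge\varepsilon$ and lies outside $\LHP(\varepsilon)$. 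Together with the $\alpha_2$-analyticity this gives analyticity of $K_{+\circ}$ on $\UHP(-\varepsilon)\times\S(-\varepsilon,\varepsilon)$ and of $K_{-\circ}$ on $\LHP(\varepsilon)\times\S(-\varepsilon,\varepsilon)$. Since $\UHP(-\varepsilon)\cap\LHP(\varepsilon)=\S(-\varepsilon,\varepsilon)$ in the $\alpha_1$-variable, the product $K=K_{+\circ}K_{-\circ}$ is analytic on $\S(-\varepsilon,\varepsilon)\times\S(-\varepsilon,\varepsilon)$, which is the concluding assertion. The main obstacle is precisely the uniform positivity of $\Im(\kappa(k_1,\cdot))$ on the strip; everything else is bookkeeping with the branch-cut geometry.
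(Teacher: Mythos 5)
Your proof is correct and follows essentially the same route as the paper's: identify the only $\alpha_1$-singularity of $K_{\pm\circ}$ as the pole at $\alpha_1=\mp\kappa(k_1,\alpha_2)$, and use the strict positivity of $\Im(\kappa(k_1,\cdot))$ on a sufficiently thin $\alpha_2$-strip to push that pole out of the relevant half-plane. You in fact supply two details the paper's proof leaves implicit, namely the explicit hyperbola computation ($\sigma\tau=k_R\epsilon$, $\sigma^2\le k_R^2$) locating the branch cuts of $\kappa(k_j,\cdot)$ outside $\{|\Im(\alpha_2)|<\epsilon\}$, and the growth-at-infinity-plus-compactness argument showing that the infimum of $\Im(\kappa(k_1,\cdot))$ over the non-compact strip is attained and positive.
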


\RED{The domains of analyticity and the limiting behaviour of $K_{-\circ}$ and $K_{+\circ}$ will be crucial not only when factorising $K_{+ \circ}$ and $K_{- \circ}$ in the $\alpha_2$ plane in Section \ref{sec:alpha2plane} but also when applying Liouville's theorem in Section \ref{sec:WHSystem}.}

We only prove the lemma for $K_{-\circ}$ as the proof for $K_{+\circ}$ is analogous. See also figures \ref{fig:KMinusCirc1} and \ref{fig:KPlusCirc} for a visualisation, which will be explained in more detail below, after the proof.
\begin{proof}[Proof of Lemma \ref{lemma:KpmcircDomain}]
	Let us begin by examining the behaviour in the $\alpha_2$ plane, \RED{and let $\delta$ be as in \eqref{eq.deltaDEF}.} \RED{Since for $j=1,2$, the function $\alpha_1 \mapsto \kappa(k_j,\alpha_1)$} is analytic in $\S(-\delta,\delta)$ we only need to account for the polar singularities given by $\alpha_{2\text{sing}}$ such that $\kappa(k_{1},\alpha_{2\text{sing}}) = \alpha_1$. But due to the properties of $\kappa$, we know $\mathrm{Im}(\kappa(k_{1},\alpha_{2\text{sing}}))\geq0$ with equality only possible if $\Im(\alpha_{2 \text{sing}})\geq\Im(k_1) \geq \delta$. Therefore, if we restrict $\alpha_2 \in \S(-\delta/2,\delta/2)$, say, we obtain $\RED{\delta_1:=} \min_{\RED{\alpha_{2\text{sing}}}}\{\Im(\kappa(k_{1},\alpha_{2\text{sing}}))\}  > 0$.
	Choose $\varepsilon = \min\{\delta/2,\delta_1 \}$. \RED{The limiting behaviour at $\infty$ is directly obtained from the defining formula \eqref{eq.K+odef}.}
\end{proof}	

\begin{figure}[!h]
	\centering
	\includegraphics[width=.75\textwidth]{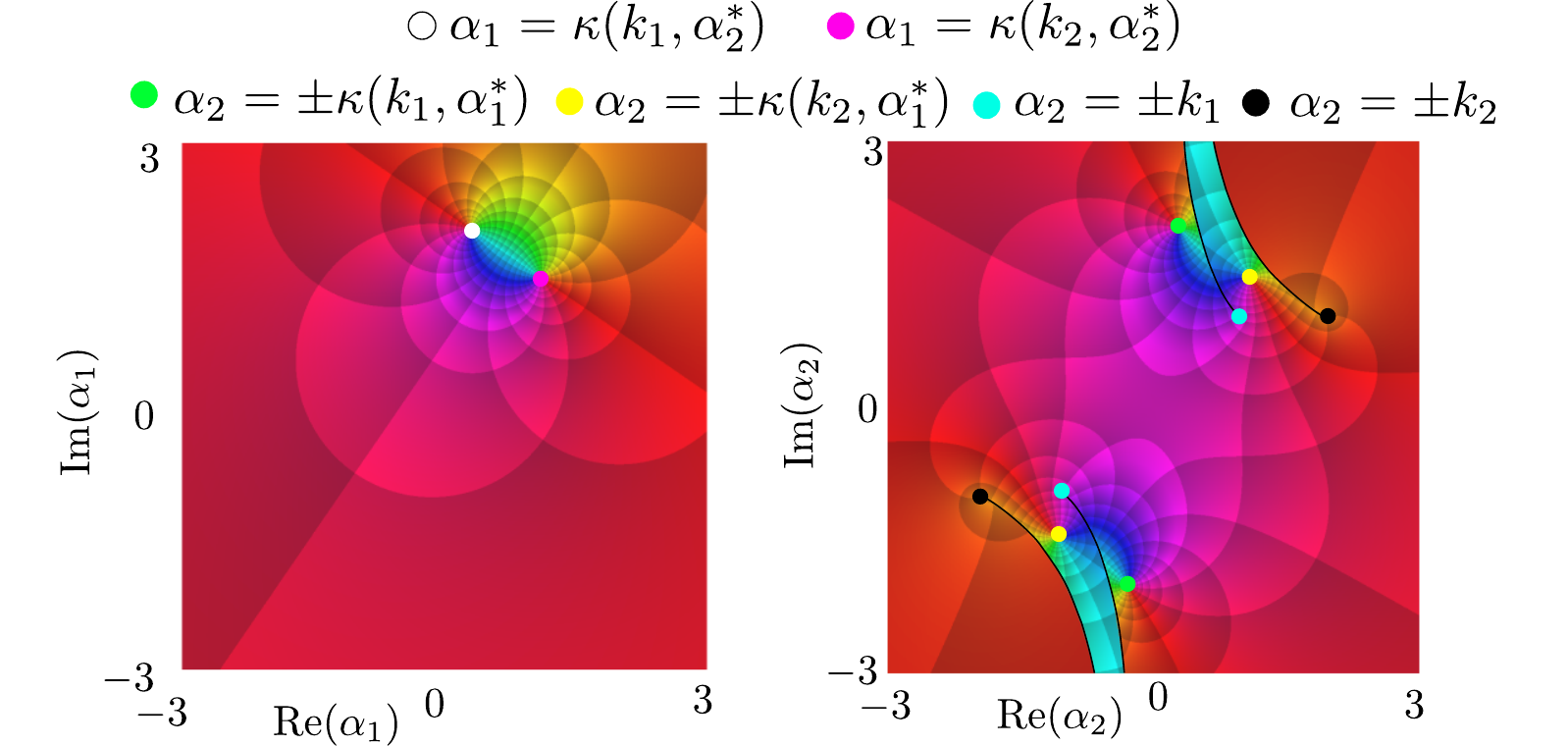}
	\vspace{-.25cm}
	\caption{Phase portraits of $K_{-\circ}$ for $k_1=1+i, \ k_2 = 2+i$. On the left, the phase portrait is taken in the $\alpha_1$ plane with fixed $\alpha^*_2 = 2 + \frac{1}{5}i$. On the right, it is taken in the $\alpha_2$ plane with fixed $\alpha^*_1=2+\frac{1}{5}i$.}
	\label{fig:KMinusCirc1}
\end{figure}
\begin{figure}[!h]
	\centering
	\includegraphics[width=.75\textwidth]{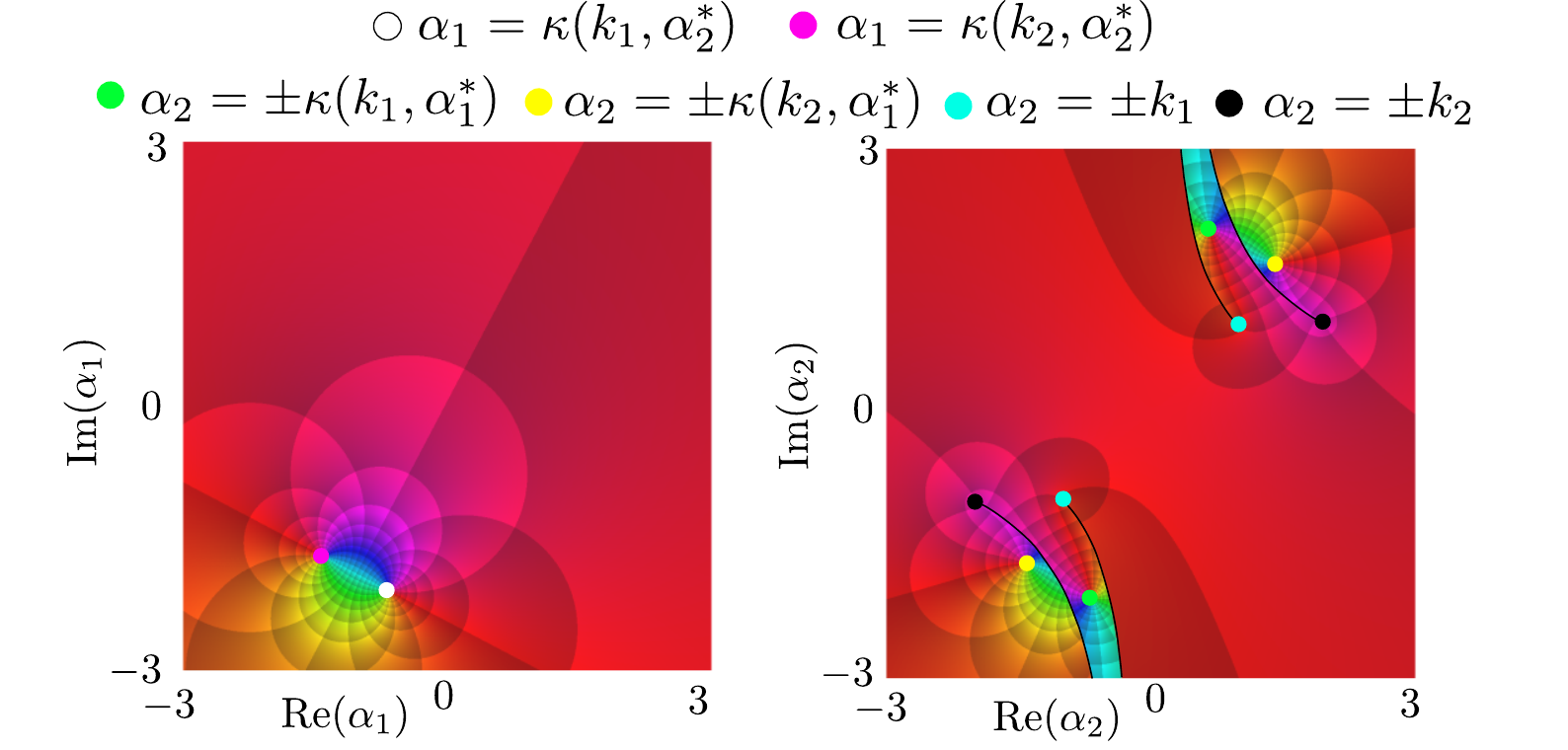}
	\vspace{-.25cm}
	\caption{Phase portraits of $K_{+\circ}$ for $k_1=1+i, \ k_2 = 2+i$. On the left, the phase portrait is taken in the $\alpha_1$ plane with fixed $\alpha^*_2 = 2 - \frac{1}{5}i$. On the right, it is taken in the $\alpha_2$ plane with fixed $\alpha^*_1=2-\frac{1}{5}i$.}
	\label{fig:KPlusCirc}
\end{figure}
Recall the notation $\S = \S(-\varepsilon,\varepsilon)$, $\UHP = \UHP(-\varepsilon)$, and $\LHP = \LHP(\varepsilon)$. Additionally, we define  
\begin{align}
	\D_{+\circ} = \UHP \times \S, \ \D_{- \circ} = \LHP \times \S, \ \D_{\circ +} = \S \times \UHP, \ \D_{\circ -} = \S \times \LHP \label{Definition:D+CircEtc}
\end{align}
so $K_{- \circ}$ is analytic on $\D_{- \circ}$ and $K_{+ \circ}$ is analytic on $\D_{+ \circ}$.

Figure \ref{fig:KMinusCirc1} visualises the properties of $K_{-\circ}$: We see that for fixed $\alpha^*_2 \in \S$ the function $K_{-\circ}(\alpha_1,\alpha^*_2)$ is analytic in the lower half plane, as the polar singularity corresponding to $\alpha_1 = \mmysqrt{k^2_1 - \alpha^{*2}_2} = \kappa(k_1,\alpha^*_2)$ lies in the upper half plane. For fixed $\alpha^*_1 \in \LHP$ on the other hand, we see that the function $K_{-\circ}(\alpha^*_1,\alpha_2)$ is analytic in some strip between its branch and polar singularities (located at $\alpha_2 = \pm \mmysqrt{k^2_1 - \alpha^{*2}_1} = \pm \kappa(k_1,\alpha^*_1)$ and $\alpha_2 = \pm k_{1,2}$ respectively) and that there are no polar singularities inside $\S$. 
An analogous visualisation of $K_{+ \circ}$ can be found in figure \ref{fig:KPlusCirc}. In figures \ref{fig:KMinusCirc1} and \ref{fig:KPlusCirc} respectively, the yellow points correspond to  polar singularities in the $\alpha_2$ plane, the white dot corresponds to the polar singularity in the $\alpha_1$ plane, whereas the cyan and black dots correspond to the function's branch points, and the green and magenta dots are simple zeros of the function.

\subsection{Factorisation in the $\alpha_2$ plane}\label{sec:alpha2K}
\subsubsection{Cauchy's formulae and bracket operators}

Throughout the remainder of this article, we will employ the following elementary yet essential theorems. These are classic results however, so we will omit the corresponding proofs. We refer to \cite{Noble} for a more detailed discussion. \RED{Moreover, all of this section's results also hold when the contour $\mathbb{R}$ which we use in the formulation of the following theorems and definition is replaced by a curved contour $\Gamma$, such as the contour $\Gamma$ mentioned in Section \ref{sec:VanishingImPart}, as long as the real part of $\Gamma$ starts at $-\infty$ and ends at $+\infty$, see \cite{AssierAbrahams1} and \cite{Noble}, but we do not need this generality for the context of the present article.} 

\RED{\begin{definition}
	We define the contours \RED{$\mathbb{R} - i \varepsilon$ and  $\mathbb{R}+i\varepsilon $ as 
		\begin{align*}
		\mathbb{R} - i \varepsilon = \{z \in \mathbb{C}| \ z=x - i \varepsilon, \ x \in \mathbb{R} \} \text{ and } \mathbb{R}+i\varepsilon= \{z \in \mathbb{C}| \ z=x + i \varepsilon, \ x \in \mathbb{R} \}
		\end{align*}} oriented from left to right for \RED{$\varepsilon$ as in Lemma \ref{lemma:KpmcircDomain}.}
\end{definition}}

\begin{theorem}[Cauchy's Formula; Sum-split]\label{thm. Cauchy Sum-split} \ 
	Let $\Phi$ be a function analytic on  \RED{$\S$, where $\S=\S(-\varepsilon,\varepsilon)$ is as defined in Section \ref{sec:Domains}}. 
	Then, provided  $\Phi(\alpha) \to 0$ as $|\alpha| \rightarrow \infty$ within
	\RED{$\S$} we have $\Phi(\alpha)=\Phi_{+}(\alpha)+\Phi_{-}(\alpha)$ on \RED{$\S$} with $\Phi_{+}$ analytic on the upper half plane \RED{$\mathrm{UHP}$} and $\Phi_{-}$ analytic on the lower half plane \RED{$\mathrm{LHP}$}. 
 \RED{Specifically, for $\alpha \in \S$ we have}
	\begin{align*}
		\Phi_{+}(\alpha)=\frac{1}{2 i \pi} \int_{\RED{\mathbb{R} -i\varepsilon}} \frac{\Phi(z)}{z-\alpha} \mathrm{d} z \ \text{ and } \	 \Phi_{-}(\alpha)=\frac{-1}{2 i \pi} \int_{\RED{\mathbb{R} +i\varepsilon}} \frac{\Phi(z)}{z-\alpha} \mathrm{d} z,
	\end{align*}
	and these formulae can be used to analytically continue $\Phi_{+}$ (resp. $\Phi_{-}$) onto \RED{$\mathrm{UHP}$} (resp. \RED{$\mathrm{LHP}$}).	
\end{theorem}

Following \cite{AssierAbrahams1}, using Cauchy's sum split we can define the following bracket operators: 
\begin{definition}[Bracket Operators]\label{def:BracketOperators}
	For any function $F: \S \times \S \to \mathbb{C}$ satisfying the conditions of Theorem \ref{thm. Cauchy Sum-split} in the $\alpha_1$ plane, say,  we define $[F]_{+ \circ}$ and $[F]_{- \circ}$ \RED{(analytic in $\D_{+\circ}$ and $\D_{-\circ}$ respectively)} as 
	\begin{align*}
		[F]_{+ \circ} = \frac{1}{2\pi i} \int_{\RED{\mathbb{R} - i \varepsilon}} \frac{F(z,\alpha_2)}{z-\alpha_1} dz \ \text{ and } \ [F]_{- \circ} = \frac{-1}{2\pi i} \int_{\RED{\mathbb{R} + i \varepsilon}} \frac{F(z,\alpha_2)}{z-\alpha_1} dz
	\end{align*}
	\RED{and in particular, we have $F= [F]_{+ \circ} + [F]_{- \circ}$ on $\S \times \S$}.
	Similarly, we define $[F]_{\circ +}$ and $[F]_{\circ -}$ if $F$ satisfies the conditions of Theorem \ref{thm. Cauchy Sum-split} in the $\alpha_2$ plane\RED{, and we have $F= [F]_{+ \circ} + [F]_{- \circ}$ on $\S \times \S$.} Note that $F$ can be defined on a domain larger than $\S \times \S$. 
\end{definition}

\begin{theorem}[Cauchy's Formula; Factorisation]\label{thm. Cauchy-Fact.} \ Let  $\Psi$ be a function analytic on $\S$ \RED{such that $\Psi$ has no zeros in $\mathcal{S}$ and $\Psi \to 1$ as $|\alpha| \to \infty$ within $\mathcal{S}$. Upon choosing the principal branch of the $\log$, this implies that $\log{\Psi} \to 0$ as $|\alpha| \to \infty$ within $\mathcal{S}$.} Then we have $\Psi(\alpha)=\Psi_{+}(\alpha) \Psi_{-}(\alpha)$ on $\S$ with $\Psi_{+}$ analytic on \RED{$\UHP$} and $\Psi_{-}$ analytic on \RED{$\LHP$}. \RED{Specifically,} for \RED{$\alpha \in \S$} we have
{\fontsize{10}{0}\selectfont	\begin{align*}
	\Psi_{+}(\alpha)=\exp \left(\frac{1}{2 i \pi} \int_{\RED{\mathbb{R} - i \varepsilon}} \frac{\RED{\log} (\Psi(z))}{z-\alpha} \mathrm{d} z\right) \text {and } \Psi_{-}(\alpha)=\exp \left(\frac{-1}{2 i \pi} \int_{\RED{\mathbb{R} + i \varepsilon}} \frac{\RED{\log} (\Psi(z))}{z-\alpha} \mathrm{d} z\right)
	\end{align*}}and these formulae can be used to analytically continue $\Psi_{+}$ onto \RED{$\UHP$} and $\Psi_{-}$  onto \RED{$\LHP$}.
\end{theorem}


\subsubsection{Factorisation of $K_{+ \circ}$ and $K_{- \circ}$ in the $\alpha_2$ plane}\label{sec:alpha2plane}

We wish to factorise $K_{+ \circ}$ and $K_{- \circ}$ in the $\alpha_2$ plane. Thus we need to verify the conditions of Theorem \ref{thm. Cauchy-Fact.}.

First, note that for fixed $\alpha^*_1$ we have 
\begin{align}
	K_{\pm \circ}(\alpha^*_1,\alpha_2) \to 1, \ \text{as} \ |\alpha_2| \to \infty \ \text{in} \ \S.
\end{align}
Thus, we just have to verify that that $K_{\pm \circ}$ does not cross $\mylog$'s branch cut, i.e. that $\mylog(K_{\pm\circ}(\alpha^*_1,\alpha_2))$
is analytic for all $\alpha_2 \in \S$. It is possible to prove this rigorously, but this is rather technical. Therefore,\RED{ in the spirit of \cite{AssierAbrahams1},} we instead provide a \emph{visual proof} of analyticity\RED{, which illustrates the validity of the statement}. 
Indeed, from figure \ref{fig:MylogKPlusCircMylogKMinusCirc} (top) we see that $\mylog(K_{-\circ})$ has no singularities for 
$\boldsymbol{\alpha} \in \LHP \times \S$ and is therefore, in particular, well-defined on $\S(-\varepsilon,\varepsilon)$ in the $\alpha_2$ plane (where $\varepsilon$ is as in Lemma \ref{lemma:KpmcircDomain}). 
Similarly, we see that $K_{+\circ}$ satisfies the conditions of Theorem \ref{thm. Cauchy-Fact.} in figure \ref{fig:MylogKPlusCircMylogKMinusCirc} (bottom). 
\begin{figure}[h!]
	\centering
	\includegraphics[width=.75\textwidth]{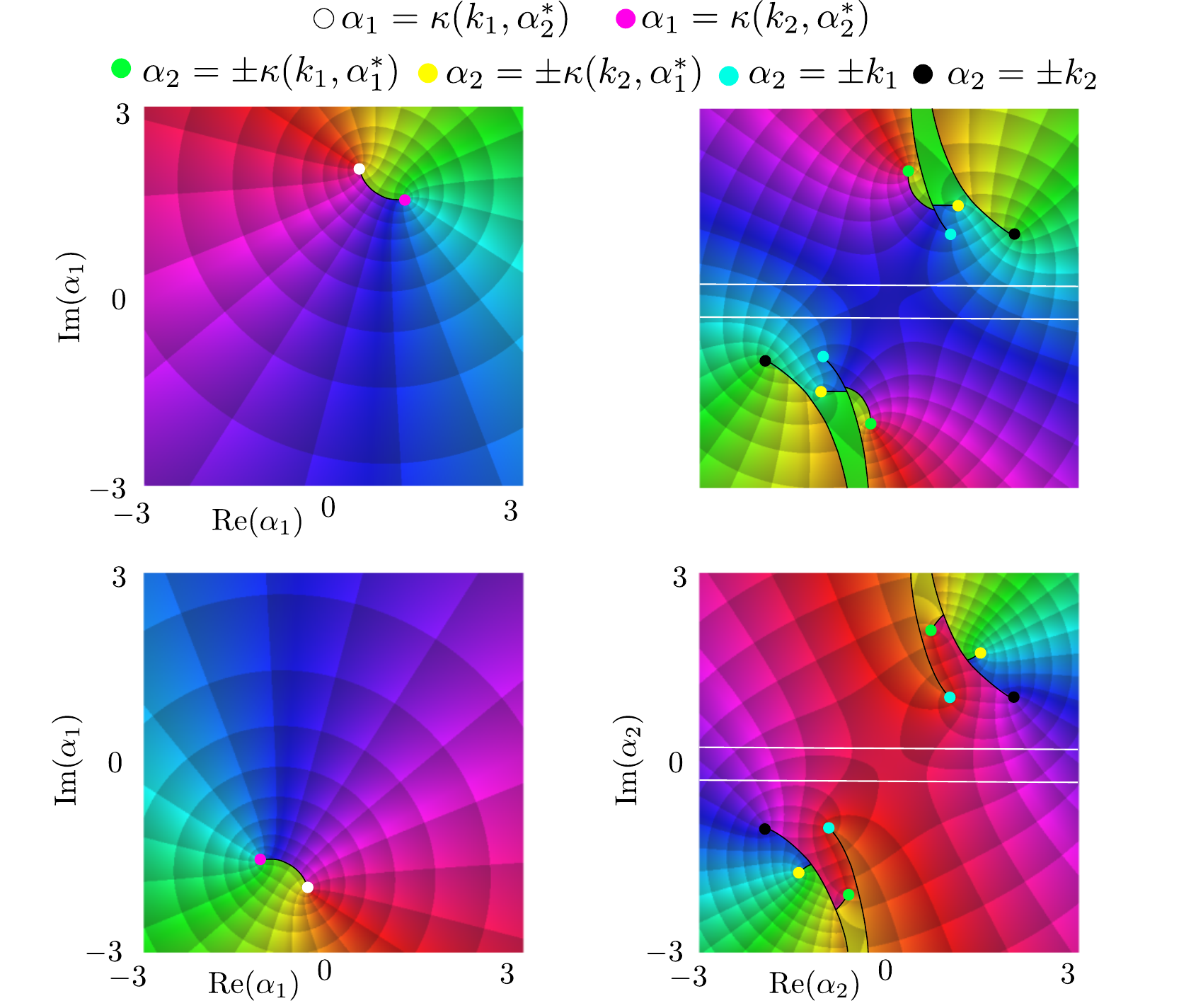}
	\vspace{-.25cm}
	\caption{Phase portrait of $\mylog(K_{-\circ})$ (top) with parameters as in figure \ref{fig:KMinusCirc1}, and phase portrait of $\mylog(K_{+\circ})$ (bottom) with parameters as in figure \ref{fig:KPlusCirc}. The contours $\mathbb{R} \pm i \varepsilon$ in the $\alpha_2$ plane are shown in white}
\label{fig:MylogKPlusCircMylogKMinusCirc}
\end{figure}
Therefore, we may apply Theorem \ref{thm. Cauchy-Fact.} to $K_{+\circ}$ and $K_{-\circ}$ and obtain
\vspace{-.1cm}
\begin{align}
K_{-\circ} = K_{--}K_{-+}, \ K_{+\circ} = K_{++}K_{+-}, \label{KpmCircFactorisation}
\end{align}
\vspace{-.15cm}
where 
\begin{align}
&	K_{--}(\boldsymbol{\alpha})  =  \exp\left(\frac{-1}{2\pi i} \int_{\mathbb{R}+i\varepsilon}  \frac{\mylog(K_{-\circ}(\alpha_1,z))}{z-\alpha_2} dz\right),		 
\end{align}
\begin{align}
&	K_{-+} (\boldsymbol{\alpha})  = \exp\left(\frac{1}{2\pi i} \int_{\mathbb{R}-i\varepsilon}  \frac{\mylog(K_{-\circ}(\alpha_1,z))}{z-\alpha_2}dz\right),		
\end{align}
\begin{align}
&K_{+-}(\boldsymbol{\alpha}) = \exp\left(\frac{-1}{2\pi i} \int_{\mathbb{R}+i\varepsilon}  \frac{\mylog(K_{+\circ}(\alpha_1,z))}{z-\alpha_2}dz\right),
\end{align}
\begin{align}
&K_{++}(\boldsymbol{\alpha}) =  \exp\left(\frac{1}{2\pi i} \int_{\mathbb{R}-i\varepsilon}  \frac{\mylog(K_{+\circ}(\alpha_1,z))}{z-\alpha_2}dz\right).
\end{align}	
By construction, we hence have
\begin{align}
K = K_{++}K_{+-}K_{--}K_{-+} \ \text{on} \ \S \times \S \label{eq.KFourWayFac}
\end{align}		
and we can verify the multiplicative structure \eqref{KpmCircFactorisation} in figures \ref{fig:KMinusCircMult} and \ref{fig:KPlusCircMult} respectively. Here, we chose to visualise the functions in the $\alpha_2$ plane but, as previously, it is of course also possible to visualise them in the $\alpha_1$ plane. 

\RED{\begin{remark}
	We may equally well choose to first factorise the kernel $K$ in the $\alpha_2$ plane and thereafter in the $\alpha_1$ plane. The procedure for doing this is exactly the same as the procedure discussed in Sections \ref{sec:alpha1K}--\ref{sec:alpha2K}, and will lead to a factorisation $K=\tilde{K}_{++}\tilde{K}_{+-}\tilde{K}_{--}\tilde{K}_{-+}$. By an application of Liouville's theorem, it can be shown $\tilde{K}_{++}=K_{++}$ etc., and therefore the resulting factorisation of $K$ given in \eqref{eq.KFourWayFac} does not depend on whether we first factorise in the $\alpha_1$ or $\alpha_2$ plane.
\end{remark}}

\begin{figure}[h!]
\centering
\includegraphics[width=\textwidth]{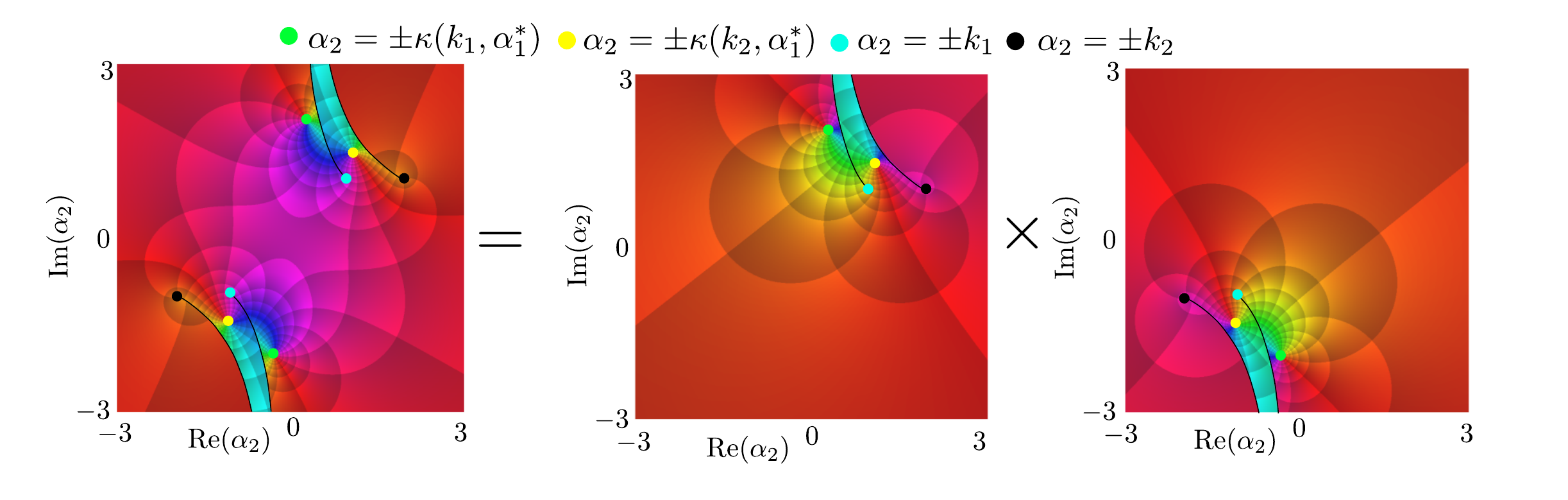}
\vspace{-.3cm}
\caption{Visualisation of $K_{-\circ} = K_{--}K_{-+}$ in the $\alpha_2$ plane with parameters as in figure \ref{fig:KMinusCirc1}. $K_{--}$ is shown in the middle and $K_{-+}$ is shown on the right.}
\label{fig:KMinusCircMult}
\end{figure}

\begin{figure}[h]
\centering
\includegraphics[width=\textwidth]{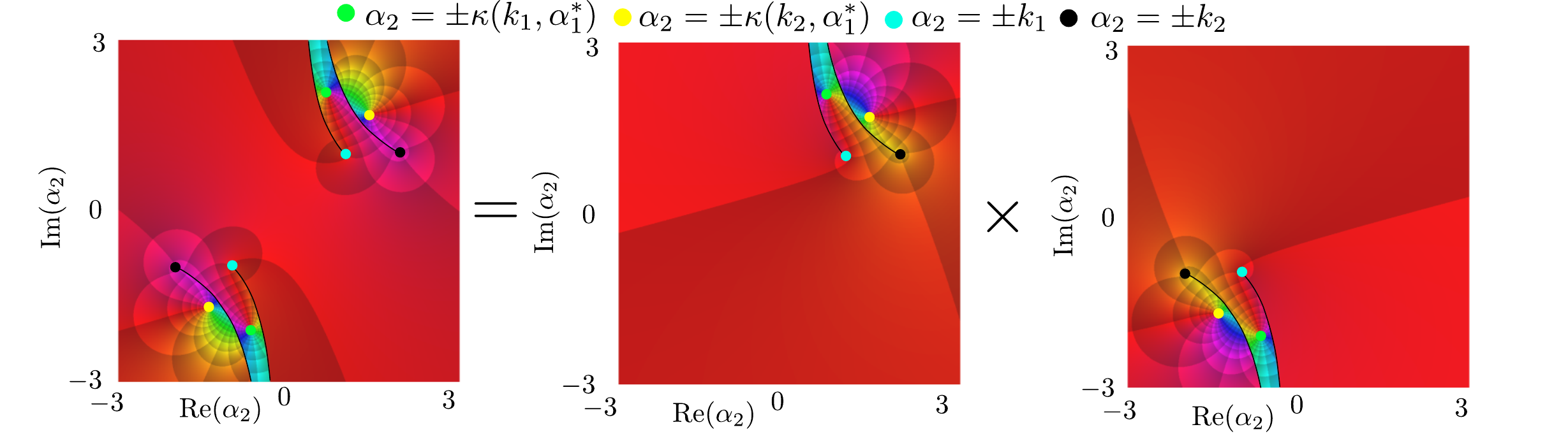}
\vspace{-.3cm}
\caption{Visualisation of $K_{+\circ} = K_{+-}K_{++}$ in the $\alpha_2$ plane with parameters as in figure \ref{fig:KPlusCirc}. $K_{+-}$ is shown in the middle and $K_{++}$ is shown on the right.}
\label{fig:KPlusCircMult}
\end{figure}

\section{The Wiener-Hopf system in $\mathbb{C}^2$}\label{sec:WHSystem}\label{Section4}

Recall the notions of $++$ functions, $+-$ functions, etc.\! (c.f.\! Definition \ref{def:++,--,+-,circFunctions}) and recall that by Corollary \ref{Corollary:RewrittenWHequation}, using these notations, the Wiener-Hopf equation \eqref{eq.1.22} can be rewritten as 
\begin{align}
-\Psi_{++}K = \Phi_{+-} + \Phi_{--} + \Phi_{-+} + P_{++} \label{eq.WH3}.
\end{align}
In the following two subsections, we will show how \eqref{eq.WH3} can be reduced to two coupled equations involving the unknowns $\Psi_{++}$ and $\Phi_{+-}$. This heavily relies on the kernel's factorisation and the bracket operators (c.f.\!  Definition \ref{def:BracketOperators}). Recall that we omit a function's argument \emph{unless it is not $\boldsymbol{\alpha}$}.

\subsection{Split in the $\alpha_1$ plane}\label{sec:WHalpha1}

We begin by writing \eqref{eq.WH3} as
\begin{align}
-K_{+\circ} \Psi_{++} = \frac{\Phi_{-\circ}}{K_{- \circ}} + 
\frac{\Phi_{+-}}{K_{- \circ}} + \frac{P_{++}}{K_{-\circ}}.
\end{align}
where we have set $\Phi_{-+} + \Phi_{--} = \Phi_{- \circ}$\RED{ and used the representation $K=K_{+\circ}K_{-\circ}$ given in \eqref{eq.Kalpha1Fac}.} For now, we just assume that Cauchy's \RED{formulas} \ref{thm. Cauchy Sum-split} and \ref{thm. Cauchy-Fact.} may be applied as we do below. This is possible due to the edge conditions \eqref{eq.2.56} and \eqref{eq.2.57} and the duality of near field behaviour in physical space and far field behaviour in Fourier space. We postpone the technical details to Appendix \ref{sec.LiouvilleApplication}. Now, applying the Cauchy sum-split to $\Phi_{+-}/K_{-\circ}$ in the $\alpha_1$ plane we obtain 
\begin{align}
-K_{+\circ} \Psi_{++} -\left[\frac{\Phi_{+-}}{K_{- \circ}}\right]_{+\circ}\!\! -\frac{P_{++}}{K_{-\circ}} = \frac{\Phi_{-\circ}}{K_{- \circ}} + \left[\frac{\Phi_{+-}}{K_{- \circ}}\right]_{-\circ}\!\!. \label{eq.3.10}
\end{align}
\RED{Recall the notations $\mathfrak{a}_1=k_1\cos(\vartheta_0) \in \LHP$ and $\mathfrak{a}_2=k_1\sin(\vartheta_0) \in \LHP$, introduced in \eqref{eq:IncidentRewritten}, and recall $P_{++} = \frac{1}{(\alpha_1 - \mathfrak{a}_1)(\alpha_2 - \mathfrak{a}_2)}$ (see \eqref{eq.KPDef}).} Now, by pole removal \RED{in the $\alpha_1$ plane}
\begin{align*}
\frac{P_{++}}{K_{-\circ}} = \underbrace{\frac{P_{++}}{K_{-\circ}(\RED{\mathfrak{a}_1},\alpha_2)}}_{\RED{\text{analytic in } \D_{+\circ}}} + \underbrace{P_{++}\left(\frac{1}{K_{-\circ}} - \frac{1}{K_{-\circ}(\RED{\mathfrak{a}_1},\alpha_2)} \right)}_{\RED{\text{analytic in } \D_{- \circ}}}.  
\end{align*}
\RED{Analyticity of the first term in $\D_{+\circ}$ is simple: the denominator does not depend on $\alpha_1$ and the numerator is analytic. For the second term the polar singularity is effectively removed since 
\[\left(\frac{1}{K_{-\circ}}-\frac{1}{K_{-\circ}(\mathfrak{a}_1,\alpha_2)}\right) \sim \frac{\kappa(k_1,\alpha_2) - \kappa(k_2,\alpha_2)}{(\kappa(k_2,\alpha_2) - \mathfrak{a}_1)^2}(\alpha_1-\mathfrak{a}_1) \text{ as } \alpha_1 \to \mathfrak{a}_1,\] which proves analyticity in $\D_{-\circ}$.}
Therefore \eqref{eq.3.10} is equivalent to 
\small
\begin{align*}
-K_{+\circ} \Psi_{++} -\left[\frac{\Phi_{+-}}{K_{- \circ}}\right]_{+\circ}\hspace{-.3cm}-	\frac{P_{++}}{K_{-\circ}(\RED{\mathfrak{a}_1},\alpha_2)} = 
\frac{\Phi_{-\circ}}{K_{- \circ}} + \left[\frac{\Phi_{+-}}{K_{- \circ}}\right]_{-\circ}\hspace{-.45cm}+ P_{++}\left(\frac{1}{K_{-\circ}} - 	\frac{1}{K_{-\circ}(\RED{\mathfrak{a}_1},\alpha_2)} \right), \numberthis \label{eq.3.10*}
\end{align*}
\normalsize
and the LHS of  \eqref{eq.3.10*} is analytic in $\D_{+ \circ}$ whereas the RHS is analytic in $\D_{- \circ}$. Thus, we can use this equality to obtain a function $E_1$ analytic on $\mathbb{C} \times \S$ by 
\begin{align*}
E_1(\alpha_1,\alpha_2) = \begin{cases}
	-K_{+\circ} \Psi_{++} -\left[\frac{\Phi_{+-}}{K_{- \circ}}\right]_{+\circ}\!\!  -\frac{P_{++}}{K_{-\circ}(\RED{\mathfrak{a}_1},\alpha_2)},&\text{if} \ \boldsymbol{\alpha} \in \D_{+\circ}, \\
	\frac{\Phi_{-\circ}}{K_{- \circ}} + \left[\frac{\Phi_{+-}}{K_{- \circ}}\right]_{-\circ}\!\!+
	P_{++}\left(\frac{1}{K_{-\circ}} - \frac{1}{K_{-\circ}(\RED{\mathfrak{a}_1},\alpha_2)} \right),\!\!& \text{if} 
	\   \boldsymbol{\alpha} \in \D_{-\circ}.
\end{cases}
\end{align*}
It can be shown that we can apply Liouville's theorem in the $\alpha_1$ plane (see Lemma \ref{Lemma:LiouvilleAlpha1}) and we find $E_1 \equiv 0$ for $\alpha_2 \in \S$. Therefore
			\begin{align}
				K_{+\circ} \Psi_{++} +\left[\frac{\Phi_{+-}}{K_{- \circ}}\right]_{+\circ}\!\!+ \frac{P_{++}}{K_{-\circ}(\RED{\mathfrak{a}_1},\alpha_2)}& = 0, \ \boldsymbol{\alpha} \in \D_{+\circ}, \label{eq.3.11}\\
				\frac{\Phi_{-\circ}}{K_{- \circ}} + \left[\frac{\Phi_{+-}}{K_{- \circ}}\right]_{-\circ}\!\!+ P_{++}\left(\frac{1}{K_{-\circ}} - \frac{1}{K_{-\circ}(\RED{\mathfrak{a}_1},\alpha_2)} \right)& = 0, \ \boldsymbol{\alpha} \in \D_{-\circ}. \label{eq.3.12}
			\end{align}  	

\subsection{Split in the $\alpha_2$ plane}
Multiplying \eqref{eq.3.11} by $K_{-+}(\RED{\mathfrak{a}_1},\alpha_2)/K_{+-}$ \RED{and using \eqref{KpmCircFactorisation}} we obtain 
\begin{align*}
-\Psi_{++}K_{++}K_{-+}(\RED{\mathfrak{a}_1},\alpha_2) = &  \frac{P_{++}}{K_{--}(\RED{\mathfrak{a}_1},\alpha_2)K_{+-}} 
+\frac{K_{-+}(\RED{\mathfrak{a}_1},\alpha_2)}{K_{+-}}\left[\frac{\Phi_{+-}}{K_{-\circ}}\right]_{+\circ}
\label{eq.3.27} \numberthis
\end{align*}
which is valid in $\D_{+ \circ}$. 
Applying the Cauchy sum-split in the $\alpha_2$ plane to  $\frac{K_{-+}(\RED{\mathfrak{a}_1},\alpha_2)}{K_{+-}}\left[\frac{\Phi_{+-}}{K_{-\circ}}\right]_{+\circ}$ 
we obtain 
\small
\begin{align} \frac{K_{-+}(\RED{\mathfrak{a}_1},\alpha_2)}{K_{+-}}\left[\frac{\Phi_{+-}}{K_{-\circ}}\right]_{+\circ}\!\!
= \left[\frac{K_{-+}(\RED{\mathfrak{a}_1},\alpha_2)}{K_{+-}}\left[\frac{\Phi_{+-}}{K_{-\circ}}\right]_{+\circ}\right]_{\circ-}\!\!+ 
\left[\frac{K_{-+}(\RED{\mathfrak{a}_1},\alpha_2)}{K_{+-}}\left[\frac{\Phi_{+-}}{K_{-\circ}}\right]_{+\circ}\right]_{\circ+}\!\!.
\end{align}	
\normalsize	
Similarly 
\begin{align}
\frac{P_{++}}{K_{--}(\RED{\mathfrak{a}_1},\alpha_2)K_{+-}} = \left[\frac{P_{++}}{K_{--}(\RED{\mathfrak{a}_1},\alpha_2)K_{+-}}\right]_{\circ -} + \left[\frac{P_{++}}{K_{--}(\RED{\mathfrak{a}_1},\alpha_2)K_{+-}}\right]_{\circ +}
\end{align}
and by pole removal \RED{in the $\alpha_2$ plane}: 
\begin{align*}
&	\left[\frac{P_{++}}{K_{--}(\RED{\mathfrak{a}_1},\alpha_2)K_{+-}}\right]_{\circ-}  = P_{++} \left(
\frac{1}{K_{--}(\RED{\mathfrak{a}_1},\alpha_2) K_{+-}} - \frac{1}{K_{--}(\RED{\mathfrak{a}_1},\RED{\mathfrak{a}_2})K_{+-}(\alpha_1,\RED{\mathfrak{a}_2})}\right), \\ 
&	\left[\frac{P_{++}}{K_{--}(\RED{\mathfrak{a}_1},\alpha_2)K_{+-}}\right]_{\circ+} = \frac{P_{++}}{K_{--}(\RED{\mathfrak{a}_1},\RED{\mathfrak{a}_2}) K_{+-}(\alpha_1,\RED{\mathfrak{a}_2})}. 
\end{align*}
\RED{Similarly to the pole removal performed in Section \ref{sec:WHalpha1}, the analyticity of $P_{++}/K_{--}(\RED{\mathfrak{a}_1},\RED{\mathfrak{a}_2})$ in $\D_{\circ -}$ is verified, and the analyticity of \[
	P_{++} \left(
	\frac{1}{K_{--}(\RED{\mathfrak{a}_1},\alpha_2) K_{+-}} - \frac{1}{K_{--}(\RED{\mathfrak{a}_1},\RED{\mathfrak{a}_2})K_{+-}(\alpha_1,\RED{\mathfrak{a}_2})}\right)
	\] in $\D_{\circ +}$ can be proved by writing $1/K_{--}(\RED{\mathfrak{a}_1},\alpha_2) K_{+-}$ as its Taylor series (in the $\alpha_2$ plane) at $\mathfrak{a}_2$.} Therefore, we can use \eqref{eq.3.27} to obtain a function $E_2$ analytic on $\UHP \times \mathbb{C}$ by
 {\fontsize{7.85pt}{0pt}\selectfont
		\begin{align}
			\nonumber
			E_{2}=  \begin{cases}
				-\Psi_{++}K_{++}K_{-+}(\RED{\mathfrak{a}_1},\alpha_2)  - \frac{P_{++}}{K_{--}(\RED{\mathfrak{a}_1},\RED{\mathfrak{a}_2})K_{+-}(\alpha_1,\RED{\mathfrak{a}_2})}
				-  \left[\frac{K_{-+}(\RED{\mathfrak{a}_1},\alpha_2)}{K_{+-}} \left[\frac{\Phi_{+-}}{K_{-\circ}}\right]_{+\circ}\right]_{\circ +}\!\!, \ \boldsymbol{\alpha} \in \D_{++} \\
				P_{++}\left(
				\frac{1}{K_{--}(\RED{\mathfrak{a}_1},\alpha_2) K_{+-}} \right. - \left. \frac{1}{K_{--}(\RED{\mathfrak{a}_1},\RED{\mathfrak{a}_2})K_{+-}(\alpha_1,\RED{\mathfrak{a}_2})}\right)
				+  \left[\frac{K_{-+}(\RED{\mathfrak{a}_1},\alpha_2)}{K_{+-}} \left[\frac{\Phi_{+-}}{K_{-\circ}}\right]_{+\circ}\right]_{\circ -}\!\!\!\!, \  \boldsymbol{\alpha} \in \D_{+-}. 
			\end{cases}
		\end{align}
	}
\RED{Similar to Section \ref{sec:WHalpha1}}, it can be shown that we can apply Liouville's theorem in the $\alpha_2$ plane to $E_2$ and obtain $E_2 \equiv 0$ (see Lemma \ref{Lemma:LiouvilleAlpha2}). Therefore we find the main result of the present work:

\begin{theorem}\label{MainThm} The unknowns $\Psi_{++}, \Phi_{+-}$ of the Wiener Hopf equation \eqref{eq.WH3} satisfy
				\begin{alignat*}{3}
				 \numberthis \label{eq.RadlowFunctionalRevised1}	-\Psi_{++}=&  \frac{P_{++}}{K_{++}K_{-+}(\RED{\mathfrak{a}_1},\alpha_2)K_{--}(\RED{\mathfrak{a}_1},\RED{\mathfrak{a}_2})K_{+-}(\alpha_1,\RED{\mathfrak{a}_2})} \\
					& +\frac{1}{K_{++}K_{-+}(\RED{\mathfrak{a}_1},\alpha_2)} \left[\frac{K_{-+}(\RED{\mathfrak{a}_1},\alpha_2)}{K_{+-}} \left[\frac{\Phi_{+-}}{K_{-\circ}}\right]_{+\circ}\right]_{\circ +} \hfill &&\text{for}  \ \boldsymbol{\alpha} \in \D_{++},  \\
				 \numberthis \label{eq.RadlowFunctionalRevisited2} 	0 = & P_{++}\left(
					\frac{1}{K_{--}(\RED{\mathfrak{a}_1},\alpha_2) K_{+-}} - \frac{1}{K_{--}(\RED{\mathfrak{a}_1},\RED{\mathfrak{a}_2})K_{+-}(\alpha_1,\RED{\mathfrak{a}_2})}\right) \\
					& + \left[\frac{K_{-+}(\RED{\mathfrak{a}_1},\alpha_2)}{K_{+-}} \left[\frac{\Phi_{+-}}{K_{-\circ}}\right]_{+\circ}\right]_{\circ -} &&\text{for}  \ \boldsymbol{\alpha} \in \D_{+-}. 
				\end{alignat*}	
\end{theorem} 

\subsection{Significance of Theorem \ref{MainThm}} 
\normalsize
First, note that
the expression for $\Psi_{++}$ in \eqref{eq.RadlowFunctionalRevised1} only differs from Radlow's ansatz given in \cite{Radlow2} by the second term on the equation's RHS. Moreover, it is remarkable that formally  \eqref{eq.RadlowFunctionalRevised1} and \eqref{eq.RadlowFunctionalRevisited2} are almost the same set of equations one obtains for the quarter-plane problem (c.f \cite{AssierAbrahams1}\! eq. 5.12 $\&$ 5.13).\! That is, these equations only differ by the value of the kernel $K=K_{++}K_{+-}K_{--}K_{-+}$ and the sign in front of $\Psi_{++}$ (the latter can be viewed as a notational difference, as discussed in remark \ref{Remark:ComparisonWithQP}). Additionally, if it was somehow possible to invert \eqref{eq.RadlowFunctionalRevisited2} and thus obtain $\Phi_{+-}$ we would obtain $\Psi_{++}$ by \eqref{eq.RadlowFunctionalRevised1}, which by the Wiener-Hopf equation \eqref{eq.1.22} gives $\Phi_{3/4}$ and therefore solves the diffraction problem at hand (by inverse Fourier transform).  

There are several benefits to \eqref{eq.RadlowFunctionalRevised1}. First, it is clear that \eqref{eq.RadlowFunctionalRevised1} indicates Radlow's error as the additional term is missing in his analysis. Second, the constructive procedure given in this article can be helpful in understanding how Radlow's ansatz was obtained and to quantify his error. Indeed, Radlow only stated his solution in \cite{Radlow2} making it difficult to pinpoint where exactly he went wrong.
Additionally, Radlow's ansatz predicts the wrong corner asymptotics as was pointed out by Kraut and Lehmann in \cite{KrautLehmann}. 
Therefore, just as in the quarter-plane case, the correct near field behaviour should be enforced by the additional term in \eqref{eq.RadlowFunctionalRevised1}. This additional term involves the unknown function $\Phi_{+-}$, which should satisfy the \emph{compatibility equation} \eqref{eq.RadlowFunctionalRevisited2}. This equation does not appear in Radlow's work and to our knowledge not in any subsequent work. However, as already pointed out, it is remarkably similar to the compatibility equation found for the quarter-plane diffraction problem in \cite{AssierAbrahams1}. Therefore, we strongly believe it is possible to use \eqref{eq.RadlowFunctionalRevisited2} to test approximations for $\Phi_{+-}$ and thus obtain an approximate solution to $\Psi_{++}$. Indeed, in \cite{AssierAbrahams2} Assier and Abrahams proposed a scheme to accurately approximate $\Phi_{+-}$ for the quarter-plane diffraction problem and we plan to propose a similar method for the penetrable wedge diffraction problem as part of our future work.
Moreover, we do believe that the spectral functions $\Psi_{++}$ and $\Phi_{3/4}$ can be used to obtain far field contributions using the novel `Bridge and Arrow' notation as introduced in \cite{AssierShanin3}, which will also be the basis of future work.


\section{Vanishing imaginary part of the wavenumbers}\label{sec:VanishingImPart}

So far, everything that has been done was under the assumption that $\Im(k_{1,2}) >0$. Let us discuss the limiting procedure $\Im(k_{1,2}) \to 0$. Then the domain of analyticity of $\Psi_{++}$ as discussed in Section \ref{Cha:DomainsOfAnalyticty}, would become $\UHP(0) \times \UHP(0)$. However, due to the incident wave $\phi_{\iin}$, we expect $\Psi_{++}$ to then have polar singularities on the real line at $\alpha_1 = \RED{\mathfrak{a}_1}$ and $\alpha_2= \RED{\mathfrak{a}_2}$ (c.f.\! \cite{AssierAbrahams1, AssierShanin}). Moreover, due to the Kernel, we expect $\Psi_{++}$ to also have branch singularities at $\alpha_1=-k_{1,2},\ \alpha_2=-k_{1,2}$ and polar singularities at some parts of the real circle $\boldsymbol{\alpha}^2 = k^2_2$ (again, c.f.\! \cite{AssierAbrahams1, AssierShanin}). Therefore, when evaluating the physical field
\begin{align}
\psi(\x) = \frac{1}{4 \pi^2} \iint_{\mathbb{R}^2} \Psi_{++}(\boldsymbol{\alpha}) e^{-i  \boldsymbol{\alpha} \cdot \x } d \boldsymbol{\alpha}
\end{align}
we have to indent the `contour' $\mathbb{R}^2$ to $\Gamma \times \Gamma$, see figure \ref{fig:Gamma} (these contours are thoroughly discussed in \cite{AssierAbrahams1}), in order to avoid these singularities\footnote{Here, the choice of incident angle is crucial as this contour is only valid for $\vartheta_0 \in (\pi, 3\pi/2)$.} (note that in the figure the relevant parts of the circle $\boldsymbol{\alpha}^2 =k_2$ are not shown; however, $\Gamma$ also avoids these points). 
By Cauchy's theorem, this does not change the value of $\psi$ and therefore 
\begin{align}
\psi(\x) = \frac{1}{4 \pi^2} \iint_{\Gamma \times \Gamma} \Psi_{++}(\boldsymbol{\alpha}) e^{-i \boldsymbol{\alpha} \cdot \x } d \boldsymbol{\alpha} \label{eq.psiNoImPart}
\end{align}
defines the correct physical field for $\Im(k_{1,2})=0$. Note that to find the singularities of $\Psi_{++}$ which have to be avoided and to make sense of $\Psi_{++}$ in the lower half planes (for vanishing imaginary part), we have to analytically continue $\Psi_{++}$ into a larger domain than that given in Section \ref{Cha:DomainsOfAnalyticty} and unveil its singularities therein. 

\begin{figure}[!h]
\centering
\includegraphics[width=.75\textwidth]{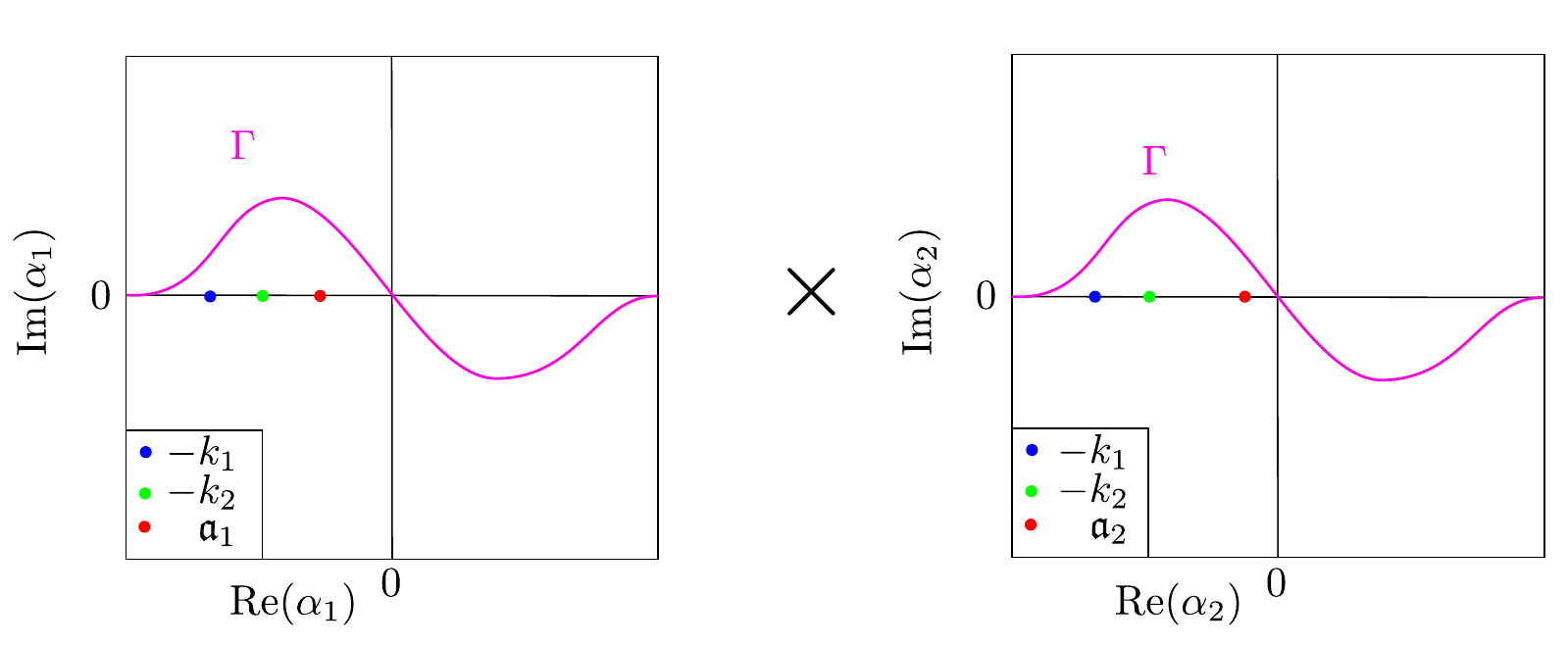}\vspace{-.25cm}
\caption{Contour $\Gamma \times \Gamma$ used in the integral \eqref{eq.psiNoImPart}.}   
\label{fig:Gamma}
\end{figure}

Finally, we mention that $\Phi_{3/4}$ can be dealt with similarly when evaluating $\phi_{\ssc}$ as $\Im(k_{1,2}) \to 0$. That is
\begin{align}
\phi_{\ssc} =\frac{1}{4 \pi^2} \iint_{\Gamma \times \Gamma} \Phi_{3/4}(\boldsymbol{\alpha}) e^{-i  \boldsymbol{\alpha}\cdot \x} d \boldsymbol{\alpha}.
\end{align}


\section{Conclusion}

In this article, we revisited Radlow's double Wiener-Hopf approach to the penetrable wedge diffraction problem. We gave a constructive procedure to obtain his ansatz and hopefully add more clarity to his innovative work. After transforming the physical boundary value problem to two complex dimensional Fourier space, Radlow's Wiener-Hopf equation was recovered, the solution to which directly solves the diffraction problem at hand by inverse Fourier transform. Using the factorisation techniques developed by Assier and Abrahams in \cite{AssierAbrahams1}, the Wiener-Hopf equation \eqref{eq.1.22} was reduced to a coupled system of two functional equations, \eqref{eq.RadlowFunctionalRevised1} and \eqref{eq.RadlowFunctionalRevisited2}, involving two unknowns $\Psi_{++}$ and $\Phi_{+-}$. The first equation involves Radlow's exact ansatz, which gives yet another reason for why his ansatz cannot be the Wiener-Hopf equation's solution (and therefore not solve the diffraction problem at hand). The second equation, the compatibility equation, involves solely the unknown $\Phi_{+-}$. Solving this equation is key to find $\Psi_{++}$, but failing this, we believe it can be used efficiently to find novel approximation schemes for the physical fields.

Finally, it is remarkable how similar the penetrable wedge diffraction problem is to the quarter-plane problem in Fourier space. That is, formally, all occurring relations/equations are almost identical and only differ by $K$'s  structure and $\Psi_{++}$'s sign. Using the novel complex analysis methods developed in  \cite{AssierShanin} and \cite{AssierShanin2} we believe that it is possible to obtain information on the physical field's components by studying the crossing of singularities of $\Psi_{++}$ and $\Phi_{3/4}$. To summarise, this leaves us with the following questions, which we hope to answer in future articles:
\begin{itemize}
\item Applying the methods developed in \cite{AssierAbrahams2}, can we find a new accurate approximation scheme for the penetrable wedge diffraction problem?
\item Using the analytical continuation techniques developed in \cite{AssierShanin}, what more information can we get on $\Psi_{++}$'s and $\Phi_{3/4}$'s domain of analyticity especially regarding their singularity structure?
\item Can the novel Bride and Arrow notation (c.f.\! \cite{AssierShanin2}) be used to obtain far-field asymptotics for $\psi$ and $\phi$?
\end{itemize}


\appendix	

\section{On the derivation of the Wiener-Hopf equation}\label{Appendix: WH-Details}	

Let us show how the Wiener-Hopf equation \eqref{eq.1.22} for the diffraction problem at hand is obtained. Recall the definition of the $1/4$ and $3/4$ Fourier transforms (c.f.\! Definitions \ref{def. 1/4FT} and \ref{def. 3/4FT}) i.e.,
letting $Q_n$ denote the $n$\text{th} quadrant in the plane $\mathbb{R}^2$, we have:
\begin{align}
\mathcal{F}_{3/4}[u](\boldsymbol{\alpha}) & = \iint_{Q_2} u(\x)e^{i \boldsymbol{\alpha}\cdot \x}
d\x + \iint_{Q_3}u(\x)e^{i  \boldsymbol{\alpha}\cdot \x}
d\x  + \iint_{Q_4} u(\x)e^{i \boldsymbol{\alpha}\cdot \x}
d\x \label{eq.3/4Split}, \\[.5em]
\mathcal{F}_{1/4}[u](\boldsymbol{\alpha}) & = \iint_{Q_1}u(\x)e^{i  \boldsymbol{\alpha}\cdot \x}
d\x. 
\end{align}
We apply the operator $\mathcal{F}_{3/4}$ (resp. $\mathcal{F}_{1/4}$) to \eqref{eq:1.1} (resp. \eqref{eq:1.3}).
By Green's second identity we have: 
{\fontsize{10}{0}\selectfont
\begin{align*}
	 \label{eq.1.16}
\numberthis \iint_{Q_1} (\Delta
\psi(\x)) e^{i \boldsymbol{\alpha} \cdot \x} d\x
= & - (\alpha^2_1 + \alpha^2_2) \iint_{Q_1}\psi(\x) e^{i \boldsymbol{\alpha} \cdot \x} d\x \\ 
& - \int_{0}^{\infty} (\partial_{x_1} \psi(0^+,x_2)) e^{i \alpha_2 x_2} dx_2 - \int_{0}^{\infty} (\partial_{x_2} \psi(x_1,0^+)) e^{i \alpha_1 x_1} dx_1  \\
& + i \alpha_1 \int_{0}^{\infty} \psi(0^+,x_2) e^{i \alpha_2 x_2} dx_2 + i \alpha_2 \int_{0}^{\infty} \psi(x_1,0^+) e^{i \alpha_1 x_1} dx_1.  
\end{align*}} 
Similarly, using \eqref{eq.3/4Split} and after a lengthy but straightforward calculation,
we find
\begin{align*}
 \label{eq.1.15}	
\numberthis \mathcal{F}_{3/4}[\Delta \phi_{\ssc}] = & -(\alpha^2_1 + \alpha^2_2) \mathcal{F}_{3/4}(\phi_{\ssc}) + \int_{0}^{\infty}(\partial_{x_1}\phi_{\ssc}(0^-,x_2)) e^{i \alpha_2 x_2} dx_2 \\
& + \int_{0}^{\infty} (\partial_{x_1} \phi_{\ssc}(x_1,0^-)) e^{i \alpha_1 x_1} dx_1 - i \alpha_1 \int_{0}^{\infty} \phi_{\ssc}(0^-,x_2) e^{i \alpha_2 x_2} dx_2  \\
& - i \alpha_2 \int_{0}^{\infty} \phi_{\ssc}(x_1,0^-) e^{i \alpha_1 x_2} dx_1.
\end{align*}
Now	we can use the boundary conditions \eqref{eq:1.4}--\eqref{eq:1.6} to rewrite \eqref{eq.1.16} as 
\begin{align*}
\label{eq.1.17}	
\numberthis \mathcal{F}_{1/4}[\Delta
\psi]
= & - (\alpha^2_1 + \alpha^2_2) \mathcal{F}_{1/4}[\psi]\\ 
& - \left(\int_{0}^{\infty} (\partial_{x_1} \phi_{\ssc}(0^-,x_2)) e^{i \alpha_2 x_2} dx_2 + \int_{0}^{\infty} (\partial_{x_2} \phi_{\ssc}(x_1,0^-)) e^{i \alpha_1 x_1} dx_1 \right) \\
& + i \alpha_1 \int_{0}^{\infty} \phi_{\ssc}(0^-,x_2) e^{i \alpha_2 x_2} dx_2 + i \alpha_2 \int_{0}^{\infty} \phi_{\ssc}(x_1,0^-) e^{i \alpha_1 x_1} dx_1 \\
& - \left(\int_{0}^{\infty} (\partial_{x_1} \phi_{\iin}(0^-,x_2)) e^{i \alpha_2 x_2} dx_2 + \int_{0}^{\infty} (\partial_{x_2} \phi_{\iin}(x_1,0^-)) e^{i \alpha_1 x_1} dx_1 \right) \\
& + i \alpha_1 \int_{0}^{\infty} \phi_{\iin}(0^-,x_2) e^{i \alpha_2 x_2} dx_2 + i \alpha_2 \int_{0}^{\infty} \phi_{\iin}(x_1,0^-) e^{i \alpha_1 x_1} dx_1. 
\end{align*}	
But $\phi_{\iin}= \exp(-i(a_1x_1 + a_2x_2))$ so, since $-\mathrm{Im}(\RED{\mathfrak{a}_{1,2}}) > 0$, we calculate:
\begin{align*}
\numberthis \int_{0}^{\infty} (\partial_{x_1} \phi_{\iin}(0^-,x_2)) e^{i \alpha_2 x_2} dx_2 &= -ia_1 \int_{0}^{\infty} e^{i (-\RED{\mathfrak{a}_2} +\alpha_2) x_2} dx_2 \\
& = \frac{\RED{\mathfrak{a}_1}}{\alpha_2 -\RED{\mathfrak{a}_2}} . 
\end{align*}
Similarly, we compute the other terms in \eqref{eq.1.17} involving $\phi_{\iin}$, and obtain
\begin{align*}
\label{eq:2.22}	
\numberthis  \mathcal{F}_{3/4}[\Delta \phi_{\ssc}] + \mathcal{F}_{1/4}[\Delta \psi ]
=& -(\alpha^2_1 + \alpha^2_2)\mathcal{F}_{3/4}[\phi_{\ssc}] -(\alpha^2_1 + \alpha^2_2)\mathcal{F}_{1/4}[\psi] \\
& -\frac{\RED{\mathfrak{a}_1}}{\alpha_2 - \RED{\mathfrak{a}_2}} - \frac{\RED{\mathfrak{a}_2}}{\alpha_1 -\RED{\mathfrak{a}_1}}
- \frac{\alpha_1}{\alpha_2 -\RED{\mathfrak{a}_2}} - \frac{\alpha_2}{\alpha_1 -\RED{\mathfrak{a}_1}}. 
\end{align*}	 
Thus
\begin{align*}
\label{eq.1.19} 
\numberthis 0 =& \mathcal{F}_{3/4}[\Delta \phi_{\ssc} +k^2_1 \phi_{\ssc}] + \mathcal{F}_{1/4}[\Delta \psi +k^2_2 \psi]  \\
= & (k^2_1-\alpha^2_1 - \alpha^2_2)\left(\mathcal{F}_{3/4}[\phi_{\ssc}] + \frac{1}{( \alpha_1 -\RED{\mathfrak{a}_1})(\alpha_2 -\RED{\mathfrak{a}_2})}\right) +(k^2_2 - \alpha^2_1 - \alpha^2_2)\mathcal{F}_{1/4}[\psi]. 
\end{align*}	
which is equivalent to the Wiener-Hopf equation \eqref{eq.1.22}. \\ 

\subsection{On the importance of $\lambda =1$}\label{sec.ImportanceLambda=1}
Recall that throughout this article, we assume $\lambda=1$ for the contrast parameter $\lambda$ (c.f.\! Section \ref{sec:ProblemFormulation}). If this was not the case, i.e. for a general $\lambda$, the corresponding boundary conditions for the normal derivative would, instead of \eqref{eq:1.5} and \eqref{eq:1.6}, read 
\begin{align}
& \frac{1}{\lambda} \partial_{x_1} \phi(0^-, x_2>0) = \partial_{x_1} \psi(0^+, x_2>0)  \\
& \frac{1}{\lambda}  \partial_{x_2} \phi(x_1>0, 0^-) =  \partial_{x_2} \psi(x_1>0, 0^+). 
\end{align}
But using these boundary conditions, and repeating the preceding procedure, we would instead of \eqref{eq.1.17} find 
\begin{align*}
\label{eq.LambdaProblem}
\numberthis \mathcal{F}_{1/4}[\Delta
\psi]
= & - (\alpha^2_1 + \alpha^2_2) \mathcal{F}_{1/4}[\psi]\\ 
& - \frac{1}{\lambda} \left(\int_{0}^{\infty} (\partial_{x_1} \phi_{\ssc}(0^-,x_2)) e^{i \alpha_2 x_2} dx_2 + \int_{0}^{\infty} (\partial_{x_2} \phi_{\ssc}(x_1,0^-)) e^{i \alpha_1 x_1} dx_1 \right) \\
& + i \alpha_1 \int_{0}^{\infty} \phi_{\ssc}(0^-,x_2) e^{i \alpha_2 x_2} dx_2 + i \alpha_2 \int_{0}^{\infty} \phi_{\ssc}(x_1,0^-) e^{i \alpha_1 x_1} dx_1 \\
& - \frac{1}{\lambda} \left(\int_{0}^{\infty} (\partial_{x_1} \phi_{\iin}(0^-,x_2)) e^{i \alpha_2 x_2} dx_2 + \int_{0}^{\infty} (\partial_{x_2} \phi_{\iin}(x_1,0^-)) e^{i \alpha_1 x_1} dx_1 \right) \\
& + i \alpha_1 \int_{0}^{\infty} \phi_{\iin}(0^-,x_2) e^{i \alpha_2 x_2} dx_2 + i \alpha_2 \int_{0}^{\infty} \phi_{\iin}(x_1,0^-) e^{i \alpha_1 x_1} dx_1  
\end{align*}
whilst equation \eqref{eq.1.15} obtained for $\mathcal{F}_{3/4}[\Delta \phi_{\ssc}]$ remains the same. But then the boundary terms on the RHS of \eqref{eq.LambdaProblem} not including the field's normal derivative do not cancel with the corresponding boundary terms in \eqref{eq.1.15} when considering $\mathcal{F}_{3/4}[\Delta \phi_{\ssc}] + \mathcal{F}_{1/4}[\Delta \psi]$ and therefore we would not obtain the Wiener-Hopf equation \eqref{eq.1.22}.

\section{Asymptotic behaviour of spectral functions}\label{sec.LiouvilleApplication}

Let us investigate the far-field behaviour of $\Psi_{++}$. For this, we need to invoke the following essential theorem: 

\begin{theorem}[Abelian Theorem]\label{thm:Abelian}
Suppose that two real-valued functions $f(r)$, $g(r)$, defined for $r>0$ are continuous in some interval $0<r<R$ where $g(r)\neq0$. 
Assume that all following transformations are well defined. Then, if 
\begin{align*}
	f(r) \sim g(r), \ \text{as} \ r \to 0 
\end{align*} 
we have 
\begin{align*}
	\int_{0}^{\infty} f(r)e^{irz} dr \sim \int_{0}^{\infty} g(r)e^{irz} dr, \ \text{as}
	\ |z| \to \infty \ \text{within} \ \UHP(0).
\end{align*}	

\end{theorem}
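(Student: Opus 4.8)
The plan is to establish the stronger statement that the ratio of the two transforms tends to $1$ as $|z|\to\infty$ within $\UHP(0)$, which is exactly the claimed asymptotic equivalence. Writing $h=f-g$, the hypothesis $f\sim g$ as $r\to 0$ means $h(r)/g(r)\to 0$, so for every $\epsilon>0$ there is a $\delta\in(0,R)$ with $|h(r)|\le\epsilon\,|g(r)|$ for all $0<r<\delta$. It therefore suffices to show that
\[
\int_0^\infty h(r)\,e^{irz}\,dr = o\!\left(\int_0^\infty g(r)\,e^{irz}\,dr\right)\quad\text{as } |z|\to\infty \text{ in } \UHP(0).
\]

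First I would recast the oscillatory transform as a Laplace-type transform by rotating the contour. For $z\in\UHP(0)$ the change of variable $w=-izr$ turns $e^{irz}$ into $e^{-w}$ and sends the positive real $r$-axis onto a ray lying in the right half $w$-plane, since $\Re(-iz)=\Im z>0$. Invoking the decay of $f$ and $g$ at infinity (precisely what the standing assumption that ``all transformations are well defined'' supplies) together with Cauchy's theorem, I would rotate this ray back onto $(0,\infty)$ to obtain
\[
\int_0^\infty g(r)\,e^{irz}\,dr = \frac{i}{z}\int_0^\infty g\!\left(\frac{iw}{z}\right) e^{-w}\,dw,
\]
and likewise for $f$ and $h$. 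This is the Watson's-lemma setting: as $|z|\to\infty$ the argument $iw/z\to 0$ for each fixed $w$, so the integrand samples $g$ only near the origin.

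Next I would split each integral at the threshold $\delta$ above. On $(0,\delta)$ the bound $|h|\le\epsilon|g|$ controls the near-origin contribution by $\epsilon$ times the corresponding $|g|$-integral, while on $(\delta,\infty)$ the factor $e^{-r\,\Im z}\le e^{-\delta\,\Im z}$ renders the tail exponentially small as $\Im z\to\infty$, hence negligible relative to the main term. Passing to the limit by dominated convergence in the rotated variable $w$ (the dominating function again furnished by the integrability assumption) then yields that the $h$-transform is $o$ of the $g$-transform, and the theorem follows.

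The hard part will be the comparison step: controlling the near-origin bound $\epsilon\int_0^\delta |g(r)|\,e^{-r\,\Im z}\,dr$ in terms of the modulus of the genuine transform $\big|\int_0^\infty g(r)\,e^{irz}\,dr\big|$. For a general sign-changing or oscillatory $g$ the latter may suffer cancellation and decay strictly faster than the former, so the estimate can break down; the argument really needs $g$ to be non-oscillatory near $0$ (for instance regularly varying, or eventually of one sign, as holds for the power-type edge behaviour in \eqref{eq.2.56}--\eqref{eq.2.57} to which the theorem is applied). A secondary subtlety is uniformity: as $|z|\to\infty$ within $\UHP(0)$ the point $z$ may approach the real axis, where $\Im z\to 0$ degrades the exponential decay and the rotation angle of the $w$-contour tends to $\pm\pi/2$. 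I would address this by restricting to the sectorial approach relevant to the application and tracking the estimates uniformly in $\arg z$.
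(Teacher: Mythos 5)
The paper does not actually prove this theorem: immediately after the statement it writes that ``the proof can be found in \cite{Doetsch} Theorem 33.1, for instance,'' so there is no in-paper argument to compare against. Your sketch follows the standard route to such Abelian theorems (subtract, bound $h=f-g$ near the origin by $\epsilon|g|$, kill the tail, conclude via a Watson's-lemma rotation), and to your credit you locate exactly the places where it does not close --- but those are genuine gaps, not polish, so as written the proposal does not prove the statement. The central one is the comparison step: the bound $\bigl|\int_0^\delta h(r)e^{irz}\,dr\bigr|\le\epsilon\int_0^\delta|g(r)|e^{-r\Im z}\,dr$ must be measured against $\bigl|\int_0^\infty g(r)e^{irz}\,dr\bigr|$, which for the comparison functions actually used in this paper decays like $1/|z|$, whereas your right-hand side is merely $O(\epsilon)$ whenever $\Im z$ stays bounded. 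One small point in your favour that you missed: since $g$ is continuous and nonvanishing on $(0,R)$ it is automatically of one sign there, so the ``sign-changing $g$'' scenario you worry about is already excluded by hypothesis; but one-signedness alone still supplies neither the required lower bound on the transform of $g$ nor the required $o(1/|z|)$ upper bound on the transform of $h$. The cited theorem avoids this by taking $g(t)=Ct^{\lambda}$ explicitly, computing $\int_0^\infty t^\lambda e^{-st}\,dt=\Gamma(\lambda+1)s^{-\lambda-1}$ exactly, and estimating the remainder by $\epsilon\,\Gamma(\Re\lambda+1)(\Re s)^{-\Re\lambda-1}$, which is then compared to $|s|^{-\Re\lambda-1}$ using a sector condition.

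That sector condition is your second, equally real, gap. The conclusion is claimed for $|z|\to\infty$ anywhere in $\UHP(0)$, but both your tail estimate (smallness of $e^{-\delta\Im z}$) and the comparison above need $\Im z\ge c|z|$ for some $c>0$. Along paths in $\UHP(0)$ with $\Im z$ bounded, a perturbation of $g$ supported away from the origin --- invisible to the hypothesis $f\sim g$ as $r\to0$ --- contributes a term of the same order $1/|z|$ as the main term but with an oscillating coefficient, so the asserted asymptotic equivalence genuinely fails there; this is why Doetsch states the result for angular regions $|\arg(-iz)|\le\psi<\pi/2$. Any honest proof must either restrict to such a sector or impose extra hypotheses on $f-g$ away from the origin (here this is swept into ``assume that all following transformations are well defined''); for the way the theorem is used in Lemma \ref{lem:Psi++Asympt}, with explicit exponential comparison functions and only $O(1/|\alpha|)$ bounds extracted, the sectorial version suffices, and your proposed fix of ``restricting to the sectorial approach relevant to the application'' is indeed the right repair --- it just needs to be carried out rather than deferred.
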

The proof can be found in \cite{Doetsch} Theorem 33.1\RED{,} for instance. Now, by the edge condition \eqref{eq.2.57}, we know 
\begin{align*}
\psi(\x) \sim B \ \text{as}, \ |\x| \to 0
\end{align*}
for a suitable constant $B$. Moreover, we can use the following trick used by Assier and Abrahams in \cite{AssierAbrahams2} and see that for any $\varepsilon >0$ we have 
\begin{align}
\psi(\x) \sim B e^{-\varepsilon x_1 - \varepsilon x_2} \ \text{as} \ |\x| \to 0.\label{eq.4.5}
\end{align}
Choose $\varepsilon = 2 \delta$ so $\alpha_{1,2}+i\varepsilon$ has strictly positive imaginary part.
Observe that \eqref{eq.4.5} implies
\begin{align}
\psi(0,x_2) \sim B e^{-\varepsilon x_2}, \ \text{as} \ x_2 \to 0 \label{eq.4.6}
\end{align}
which, in particular, gives 
\begin{align}
|\psi(0,x_2)| \sim |B| e^{-\varepsilon x_2}, \ \text{as} \ x_2 \to 0. \label{eq.4.7}
\end{align} 	
Finally, \eqref{eq.4.5} yields 
\begin{align}
\psi(\x) \sim B \psi(0,x_2) e^{-\varepsilon x_1}, \  \text{as} \ x_1 \to 0. \label{eq.4.8}
\end{align}	
Then, invoking the Abelian theorem, we first obtain using \eqref{eq.4.8} 
\begin{align}
\int_{0}^{\infty} \psi(\x)e^{i\alpha_1 x_1} dx_1 \sim B \psi(0,x_2) \int_{0}^{\infty} e^{i(\alpha_1+i \varepsilon)x_1} dx_1 =   \frac{B \psi(0,x_2)}{\alpha_1 + i \varepsilon}, \ \text{as} \ |\alpha_1| \to \infty. \label{eq.PsiAsympt1}
\end{align}
Now, due to  \eqref{eq.4.7}, invoking the Abelian theorem once again, we find
\begin{align}
\int_{0}^{\infty}|\psi(0,x_2)| e^{i \alpha_2 x_2} dx_2 \sim |B| \int_0^{\infty} e^{i(\alpha_2+i \varepsilon)x_2} dx_2= |B| \frac{1}{\alpha_2 + i \varepsilon}, \ \text{as} \ |\alpha_2| \to \infty, \label{eq.PsiAsympt2}
\end{align}
and therefore:
\begin{lemma}\label{lem:Psi++Asympt}
For fixed $\alpha^*_2$ (resp. fixed $\alpha^*_1$) in $\UHP$ we have 
\begin{align}
	&	\Psi_{++}(\alpha_1,\alpha^*_2) = \mathcal{O}(1/|\alpha_1|), \ \text{as} \ |\alpha_1| \to \infty \ \text{in} \ \UHP  \label{Psi++Behaviour1} \\
	&	\Psi_{++}(\alpha^*_1,\alpha_2) = \mathcal{O}(1/|\alpha_2|), \ \text{as} \ |\alpha_2| \to \infty \ \text{in} \ \UHP \label{Psi++Behaviour1.2}
\end{align}
and, if neither variable is fixed, 
\begin{align}
	\Psi_{++}(\alpha_1, \alpha_2) = & \mathcal{O}(1/|\alpha_1||\alpha_2|) \ \text{as} \ 
	|\alpha_1| \to \infty, \ |\alpha_2| \to \infty \ \text{in} \ \UHP.   \label{Psi++Behaviour2}
\end{align}	
\end{lemma}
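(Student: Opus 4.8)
The plan is to read off the decay of $\Psi_{++}$ directly from the near-field behaviour of $\psi$ recorded in \eqref{eq.4.5}--\eqref{eq.4.8}, using the Abelian Theorem \ref{thm:Abelian} as the bridge between near-field (as $|\x| \to 0$) and far-spectral (as $|\boldsymbol{\alpha}| \to \infty$) behaviour. Throughout I write $\Psi_{++}(\boldsymbol{\alpha}) = \Psi_{1/4}(\boldsymbol{\alpha}) = \int_0^\infty \int_0^\infty \psi(x_1,x_2)\, e^{i\alpha_1 x_1}e^{i\alpha_2 x_2}\,dx_1\,dx_2$, and recall that the shift $\alpha_j \mapsto \alpha_j + i\varepsilon$ with $\varepsilon = 2\delta$ moves the relevant spectral point into $\UHP(0)$, where Theorem \ref{thm:Abelian} applies, precisely because $\alpha_j \in \UHP = \UHP(-\varepsilon)$ and $2\delta$ exceeds the width parameter of the strip of analyticity.

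First I would establish the two one-variable estimates \eqref{Psi++Behaviour1} and \eqref{Psi++Behaviour1.2}. Fixing $\alpha_2^* \in \UHP$, I view $\Psi_{++}(\alpha_1,\alpha_2^*)$ as the one-dimensional transform in $x_1$ of $g(x_1) := \int_0^\infty \psi(x_1,x_2)\, e^{i\alpha_2^* x_2}\,dx_2$. The near-field relation \eqref{eq.4.8} gives $g(x_1) \sim C(\alpha_2^*)\, e^{-\varepsilon x_1}$ as $x_1 \to 0$, where $C(\alpha_2^*) = B\int_0^\infty \psi(0,x_2)\, e^{i\alpha_2^* x_2}\,dx_2$ is a finite constant for fixed $\alpha_2^* \in \UHP$ (finiteness following from the far-field decay \eqref{eq.u_goFarField}). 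Theorem \ref{thm:Abelian} then yields $\Psi_{++}(\alpha_1,\alpha_2^*) \sim C(\alpha_2^*)/(\alpha_1 + i\varepsilon) = \mathcal{O}(1/|\alpha_1|)$, which is \eqref{Psi++Behaviour1}; estimate \eqref{Psi++Behaviour1.2} follows by exchanging the roles of $x_1$ and $x_2$.

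For the joint estimate \eqref{Psi++Behaviour2} I would chain the two asymptotics \eqref{eq.PsiAsympt1} and \eqref{eq.PsiAsympt2}. Applying the Abelian theorem in $x_1$ first replaces the inner transform by $B\psi(0,x_2)/(\alpha_1 + i\varepsilon)$, so that $\Psi_{++}(\alpha_1,\alpha_2) \sim \tfrac{B}{\alpha_1 + i\varepsilon}\int_0^\infty \psi(0,x_2)\,e^{i\alpha_2 x_2}\,dx_2$; a second application in $x_2$, using the near-field behaviour \eqref{eq.4.6}, turns the remaining integral into $\mathcal{O}(1/|\alpha_2|)$, and the product gives the claimed $\mathcal{O}(1/|\alpha_1||\alpha_2|)$.

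The delicate point, and the one I would spend the most care on, is that this last argument is not a mere formal iteration of two one-dimensional statements: the $x_1$-asymptotics in \eqref{eq.PsiAsympt1} must hold uniformly enough in $x_2$ for the limit $|\alpha_1|\to\infty$ to be interchanged with the $x_2$-integration before one lets $|\alpha_2|\to\infty$. This is exactly where the absolute-value version \eqref{eq.4.7} earns its place: it furnishes the integrable majorant $|\psi(0,x_2)| \lesssim |B|e^{-\varepsilon x_2}$ needed to dominate the integrand and legitimise the interchange via a dominated-convergence argument, thereby converting the heuristic chaining into a rigorous bound. The remaining one-variable steps are routine consequences of Theorem \ref{thm:Abelian}.
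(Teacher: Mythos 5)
Your proposal is correct and follows essentially the same route as the paper: the paper's own proof of Lemma \ref{lem:Psi++Asympt} consists precisely of citing \eqref{eq.PsiAsympt1} and \eqref{eq.PsiAsympt2} (themselves obtained from the edge conditions \eqref{eq.4.5}--\eqref{eq.4.8} via the Abelian Theorem \ref{thm:Abelian}) and combining them for the joint estimate \eqref{Psi++Behaviour2}. Your additional care about uniformity in $x_2$, justified by the majorant \eqref{eq.4.7} and dominated convergence, is a detail the paper leaves implicit but does not change the argument.
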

\begin{proof}
We obtain \eqref{Psi++Behaviour1} from \eqref{eq.PsiAsympt1} and \eqref{Psi++Behaviour1.2} from \eqref{eq.PsiAsympt2}. To get \eqref{Psi++Behaviour2} combine \eqref{eq.PsiAsympt1} and \eqref{eq.PsiAsympt2}.
\end{proof}
Similarly, estimates for $\Phi_{-+}, \ \Phi_{--},$ and $\Phi_{+-}$ are obtained:
\begin{lemma}\label{lem:PhiAsympt.}
The functions $\Phi_{-+}, \Phi_{--}$, and $\Phi_{+-}$ satisfy the \RED{decay} estimates \eqref{Psi++Behaviour1}--\eqref{Psi++Behaviour2} as $|\alpha_{1,2}| \to \infty$ in these function's respective domains.
\end{lemma}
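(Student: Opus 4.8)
The plan is to reproduce the argument of Lemma \ref{lem:Psi++Asympt} almost verbatim, with the scattered field $\phi_{\sc}$ playing the role of $\psi$ and the relevant quadrant boundaries replacing the positive semi-axes. The single ingredient that made the previous proof work was that $\psi$ has a finite limit at the corner; the same is true of $\phi_{\sc}$. Indeed, by the edge condition \eqref{eq.2.56} we have $\phi \sim B$ as $r \to 0$, and since $\phi_{\iin} = e^{-i(a_1 x_1 + a_2 x_2)}$ is entire and equal to $1$ at the origin, the difference $\phi_{\sc} = \phi - \phi_{\iin}$ also tends to a finite constant (namely $B-1$) as $r \to 0$, uniformly in the angular variable throughout the exterior $\mathbb{R}^2 \setminus \text{PW}$ that contains $Q_2, Q_3$ and $Q_4$. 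Boundedness near the corner is all that the estimate requires; the precise value of the limit is irrelevant.

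First I would apply the exponential trick of \cite{AssierAbrahams2} that produced \eqref{eq.4.5}, writing for a suitable $\varepsilon>0$
\[
\phi_{\sc}(\x) \sim (B-1)\, e^{s_1 \varepsilon x_1 + s_2 \varepsilon x_2} \quad \text{as} \ |\x| \to 0,
\]
where the signs $s_1, s_2 \in \{+,-\}$ are fixed per quadrant so that the inserted factor decays along each ray of integration (i.e.\ $e^{+\varepsilon x_1}$ when the $x_1$-range is $(-\infty,0]$ and $e^{-\varepsilon x_1}$ when it is $[0,\infty)$, and likewise in $x_2$). Restricting to the two boundary segments bordering the relevant quadrant then yields one-variable corner asymptotics for $\phi_{\sc}$ exactly analogous to \eqref{eq.4.6}--\eqref{eq.4.8}.

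Next I would invoke the Abelian theorem (Theorem \ref{thm:Abelian}) in each complex variable separately, as in \eqref{eq.PsiAsympt1}--\eqref{eq.PsiAsympt2}. Because Theorem \ref{thm:Abelian} is stated for $\int_0^\infty f(r) e^{irz}\,dr$ with decay in $\UHP(0)$, the half-lines running over the negative axis --- the $x_1$-integration for $\Phi_{-+}$ and $\Phi_{--}$ and the $x_2$-integration for $\Phi_{--}$ and $\Phi_{+-}$ --- first need the reflection $x \mapsto -x$, which recasts the integral in standard form and delivers $\mathcal{O}(1/|\alpha|)$ decay in the corresponding \emph{lower} half-plane. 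In each case the half-plane so obtained is precisely the one dictated by the $+/-$ subscripts, i.e.\ it matches the domains $\D_{-+}$, $\D_{--}$, $\D_{+-}$ on which the three functions are respectively analytic. Combining the two one-variable estimates as in the passage to \eqref{Psi++Behaviour2} gives the joint bound $\mathcal{O}(1/|\alpha_1||\alpha_2|)$, completing all three cases.

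The main obstacle is bookkeeping rather than anything conceptual: for each of the three functions and each of its two variables one must keep the sign of the inserted exponential, the orientation of the half-line, and the half-plane ($\UHP$ or $\LHP$) of the resulting Abelian decay mutually consistent, so that the final estimate lands in the correct domain $\D_{-+}$, $\D_{--}$ or $\D_{+-}$. A minor point to verify is that $\phi_{\sc}$ approaches its corner value uniformly as each boundary ray is traversed toward the origin, but this is immediate from the uniform-in-$\vartheta$ nature of the edge condition \eqref{eq.2.56}, so no genuinely new estimate is needed.
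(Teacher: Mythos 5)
Your proposal is correct and follows essentially the same route as the paper, which proves the estimates in detail only for $\Psi_{++}$ and then asserts that $\Phi_{-+}$, $\Phi_{--}$, $\Phi_{+-}$ are handled ``similarly''; you have simply spelled out that ``similarly'' (boundedness of $\phi_{\sc}$ at the tip via \eqref{eq.2.56}, the exponential trick, the Abelian theorem applied variable by variable, and the reflection $x\mapsto -x$ to land the decay in the correct half-planes). The sign and half-plane bookkeeping you describe is exactly what is needed and is consistent with the domains $\D_{-+}$, $\D_{--}$, $\D_{+-}$.
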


\section{On the application of Liouville's theorem}

In order to apply the results of Lemma \ref{lem:Psi++Asympt} and \ref{lem:PhiAsympt.} to the functions $E_1$ and $E_2$ defined in Section \ref{sec:WHSystem},
we need to establish a link between the decay of a function $f(z)$ and \RED{the functions} $f_-(z)$ and $f_+(z)$ \RED{defined by the sum split $f(z)=f_+(z) +f_-(z)$ (c.f. Theorem \ref{thm. Cauchy Sum-split})}. 

\begin{theorem}[\RED{Decay} estimates for sum-split]\label{thm:GrowthEstimateSumSplit} \ Let $f(z)$ be a function analytic on some strip, and consider its sum-split $f(z) = f_+(z) + f_{-}(z)$. 
\begin{enumerate}
	\item If $f(z)=\mathcal{O}\left(1 /|z|^{\lambda}\right)$ as $|z| \rightarrow \infty$ within the strip, with $\lambda>1,$ then $f_{\pm}(z)$ are decaying at least like $1 /|z|$ as $|z| \rightarrow \infty$ within their respective half-planes.
	\item If $f(z)=\mathcal{O}(1 /|z|)$ as $|z| \rightarrow \infty$ within the strip, then $f_{\pm}(z)$ are decaying at least like $\ln |z| /|z|$ as $|z| \rightarrow \infty$ within their respective half-planes. 
	\item If  $f(z)=\mathcal{O}\left(1 /|z|^{\lambda}\right)$ \text { as }$|z| \rightarrow \infty$ within the strip, with $0<\lambda<1$, \text { then } $f_{\pm}(z)$ \text { are decaying } \text { at least like } $1 /|z|^{\lambda}$ \text { as }$|z| \rightarrow \infty$  within their respective half-planes.
\end{enumerate}
\end{theorem}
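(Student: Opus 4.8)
\section*{Proof proposal for Theorem \ref{thm:GrowthEstimateSumSplit}}

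The plan is to work directly from the explicit Cauchy sum-split formulas of Theorem \ref{thm. Cauchy Sum-split}. It suffices to treat $f_+$, the argument for $f_-$ being identical with $\gamma - i\varepsilon'$ replaced by $\gamma + i\varepsilon'$ and $\UHP_\gamma$ by $\LHP_\gamma$. I would start from
$$f_+(\alpha) = \frac{1}{2\pi i}\int_{\gamma - i\varepsilon'}\frac{f(z)}{z-\alpha}\,dz,$$
record the uniform pointwise bound $|f(z)| \le C(1+|z|)^{-\lambda}$ along the contour (combining boundedness on compacta with the decay hypothesis), and split the contour according to the position of $z$ relative to $\alpha$: a near-origin part $R_1 = \{|z| \le |\alpha|/2\}$, a far part $R_3 = \{|z| \ge 2|\alpha|\}$, and the intermediate part $R_2$ containing the point $\alpha_\ast$ of the contour closest to $\alpha$. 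The three cases of the theorem then arise from weighing the decay exponent $\lambda$ against the Cauchy kernel on each piece.

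The near-origin part is where the trichotomy is born. On $R_1$ one has $|z-\alpha| \ge |\alpha|/2$, so
$$\left|\int_{R_1}\frac{f(z)}{z-\alpha}\,dz\right| \le \frac{2}{|\alpha|}\int_{R_1}|f(z)|\,|dz| \le \frac{C'}{|\alpha|}\int_0^{|\alpha|/2}\frac{dt}{(1+t)^\lambda},$$
and the last integral is bounded for $\lambda>1$, is $\mathcal{O}(\ln|\alpha|)$ for $\lambda=1$, and is $\mathcal{O}(|\alpha|^{1-\lambda})$ for $0<\lambda<1$; dividing by $|\alpha|$ produces exactly the advertised rates $\mathcal{O}(1/|\alpha|)$, $\mathcal{O}(\ln|\alpha|/|\alpha|)$, and $\mathcal{O}(1/|\alpha|^\lambda)$. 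The far part is always benign: on $R_3$ one has $|z-\alpha| \ge |z|/2$, hence its contribution is at most a constant times $\int_{2|\alpha|}^\infty t^{-\lambda-1}\,dt = \mathcal{O}(|\alpha|^{-\lambda})$, which is subdominant when $\lambda>1$ and is absorbed by the target rate when $\lambda \le 1$.

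The delicate piece is $R_2$, where $|f(z)| \sim |\alpha|^{-\lambda}$ but the contour grazes $\alpha$ at distance $\delta = \mathrm{dist}(\alpha,\gamma-i\varepsilon') \ge \varepsilon'$. Estimating crudely in absolute value gives $\int_{R_2}|z-\alpha|^{-1}|dz| = \mathcal{O}(\ln(|\alpha|/\delta))$ and hence $\mathcal{O}(\ln|\alpha|/|\alpha|^\lambda)$; this already settles cases (1) and (2), since for $\lambda>1$ it is $o(1/|\alpha|)$ and for $\lambda=1$ it coincides with $\ln|\alpha|/|\alpha|$. The obstacle is case (3), where this spurious logarithm must be removed. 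My plan is to recover the cancellation lost upon taking moduli: on the symmetric arc $\{|z-\alpha| \le |\alpha|/2\}$ about $\alpha_\ast$ I would write $f(z) = f(\alpha_\ast) + (f(z)-f(\alpha_\ast))$. The constant term contributes $f(\alpha_\ast)\int \frac{dz}{z-\alpha}$, and because the arc's endpoints are equidistant from $\alpha$ this integral equals a logarithm of a unimodular ratio, of modulus at most the total argument variation, i.e. $\mathcal{O}(1)$; this yields $\mathcal{O}(|\alpha|^{-\lambda})$.

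The genuine difficulty, and what I expect to be the crux, is bounding the remainder $\int \frac{f(z)-f(\alpha_\ast)}{z-\alpha}\,dz$ without reintroducing a logarithm. Decomposing the arc into dyadic annuli $\{2^j\delta \le |z-\alpha| \le 2^{j+1}\delta\}$, the naive size bound $|f(z)-f(\alpha_\ast)| = \mathcal{O}(|\alpha|^{-\lambda})$ gives $\mathcal{O}(1)$ per annulus and a logarithmic number of annuli, i.e. the unwanted $\ln|\alpha|$. What saves case (3) is a derivative estimate $|f'(z)| = \mathcal{O}(|z|^{-\lambda-1})$, obtained from Cauchy's inequality on a disc of radius comparable to $|z|$: then $|f(z)-f(\alpha_\ast)| \lesssim |z-\alpha_\ast|\,|\alpha|^{-\lambda-1}$, the dyadic sum telescopes, and the remainder is $\mathcal{O}(|\alpha|^{-\lambda})$. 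I would therefore make explicit that the decay holds in a two-dimensional neighbourhood whose width grows with $|z|$ (as is the case for all the spectral functions to which this theorem is applied here, which are analytic and decaying on genuine half-planes), so that such a disc fits inside the domain of analyticity. Collecting the $R_1$, $R_2$, $R_3$ estimates then gives the stated rate in each case, and the verbatim argument for $f_-$ completes the proof.
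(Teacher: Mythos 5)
The paper does not actually prove Theorem \ref{thm:GrowthEstimateSumSplit}: it cites Woolcock for the proof and notes that the statement is lifted from Lemma B.1 of \cite{AssierAbrahams1}. So your proposal is an attempt to supply a self-contained argument, and for cases (1) and (2) it succeeds: the three-region decomposition $R_1,R_2,R_3$, the bound $|z-\alpha|\ge|\alpha|/2$ on $R_1$, $|z-\alpha|\ge|z|/2$ on $R_3$, and the crude $\mathcal{O}(\ln|\alpha|/|\alpha|^{\lambda})$ estimate on $R_2$ together give exactly the stated rates when $\lambda\ge 1$. This part is correct, elementary, and arguably more transparent than chasing the reference.

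Case (3), however, has a genuine gap, and you have in effect located it yourself. Your removal of the spurious logarithm on $R_2$ hinges on the derivative estimate $|f'(z)|=\mathcal{O}(|z|^{-\lambda-1})$, obtained from Cauchy's inequality on a disc of radius comparable to $|z|$. The theorem's hypothesis is only that $f$ is analytic (and decaying) on a \emph{strip}, whose width is fixed; for large $|z|$ no such disc fits inside the domain of analyticity, and Cauchy's inequality on a disc of radius $\mathcal{O}(1)$ yields only $|f'(z)|=\mathcal{O}(|z|^{-\lambda})$. With that weaker bound the estimate $|f(z)-f(\alpha_\ast)|\lesssim\min\bigl(1,|z-\alpha_\ast|\bigr)|\alpha|^{-\lambda}$ is the best available, and the dyadic sum over $1\le|z-\alpha_\ast|\le|\alpha|/2$ reinstates the factor $\ln|\alpha|$, so the argument delivers only $\mathcal{O}(\ln|\alpha|/|\alpha|^{\lambda})$ in case (3). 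Your proposed remedy --- strengthening the hypothesis to analyticity and decay on a region whose width grows with $|z|$ --- proves a different theorem, and the strengthened hypothesis is in fact \emph{false} for the functions this theorem is applied to here: the quantities being sum-split, such as $\Phi_{+-}/K_{-\circ}$ in the $\alpha_1$ plane or $K_{-+}(a_1,\alpha_2)K_{+-}^{-1}\bigl[\Phi_{+-}/K_{-\circ}\bigr]_{+\circ}$ in the $\alpha_2$ plane, are quotients of $+$ and $-$ functions and are therefore analytic only on the common strip $\S$ in the relevant variable, not on a half-plane. (As it happens, only cases (1) and (2) are invoked in Lemmas \ref{Lemma:LiouvilleAlpha1} and \ref{Lemma:LiouvilleAlpha2}, so the paper's conclusions are unaffected; but as a proof of the theorem as stated, case (3) remains open in your write-up and genuinely requires the finer analysis of the Cauchy integral near the point at infinity carried out in Woolcock's work, where the cancellation in the odd part of the kernel $1/(z-\alpha)$ on $R_2$ is exploited without recourse to a widening domain of analyticity.)
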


For the proof see \cite{Woolcock}. However, Theorem \ref{thm:GrowthEstimateSumSplit} is a summary of the results given in \cite{Woolcock}, applicable to the problem at hand. The summary is taken from \cite{AssierAbrahams1} (c.f.\! Lemma B.1 therein). 

\subsection{Application in the $\alpha_1$ plane}

\begin{lemma}\label{Lemma:LiouvilleAlpha1}
The function $E_1$ given by 
\begin{align*}
	E_1(\alpha_1,\alpha_2) = \begin{cases}
		-K_{+\circ} \Psi_{++} -\left[\frac{\Phi_{+-}}{K_{- \circ}}\right]_{+\circ}\!\!  -\frac{P_{++}}{K_{-\circ}(\RED{\mathfrak{a}_1},\alpha_2)},&\text{if} \ \boldsymbol{\alpha} \in \D_{+\circ}, \\
		\frac{\Phi_{-\circ}}{K_{- \circ}} + \left[\frac{\Phi_{+-}}{K_{- \circ}}\right]_{-\circ}\!\!+
		P_{++}\left(\frac{1}{K_{-\circ}} - \frac{1}{K_{-\circ}(\RED{\mathfrak{a}_1},\alpha_2)} \right),\!\!& \text{if} 
		\   \boldsymbol{\alpha} \in \D_{-\circ}.
	\end{cases}
\end{align*}
vanishes i.e $E_1\equiv 0$.
\end{lemma}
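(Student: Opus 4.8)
The plan is to run a Liouville argument in the $\alpha_1$ plane, treating $\alpha_2 \in \S$ as a fixed spectator parameter. The starting point is the observation that the two expressions defining $E_1$ are exactly the left- and right-hand sides of the rearranged equation \eqref{eq.3.10*}, which coincide on the overlap region $\S\times\S$ (in the $\alpha_1$ variable). Since the upper branch is built from functions analytic on $\D_{+\circ}=\UHP\times\S$ and the lower branch from functions analytic on $\D_{-\circ}=\LHP\times\S$, and since $\UHP$ and $\LHP$ genuinely overlap in the strip $\S$ where the two branches agree, analytic continuation glues them into a single function $E_1$ that is entire in $\alpha_1$ for every fixed $\alpha_2\in\S$; the target is then to show this entire function is bounded and vanishes at infinity, whence $E_1\equiv 0$ by Liouville.

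First I would check the analyticity claim term by term. The analyticity of the constituents follows from the domains already established: $K_{+\circ}$ is analytic on $\D_{+\circ}$ by Lemma \ref{lemma:KpmcircDomain}, $\Psi_{++}$ is analytic on $\D_{++}$ (which contains $\D_{+\circ}$), the bracket operators $[\,\cdot\,]_{+\circ}$ and $[\,\cdot\,]_{-\circ}$ are analytic in the upper and lower $\alpha_1$ half-planes respectively by Theorem \ref{thm. Cauchy Sum-split}, and the two pieces of the pole-removal split of $P_{++}/K_{-\circ}$ are analytic in their advertised half-planes once one verifies that the pole of $P_{++}$ at $\alpha_1=a_1$ sits in $\LHP$ and is therefore harmless for the $+\circ$ piece while being cancelled in the $-\circ$ piece. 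Matching on the overlap strip then yields a function analytic on $\mathbb{C}\times\S$.

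The crux of the proof is the decay estimate: I would show $E_1(\alpha_1,\alpha_2)\to 0$ as $|\alpha_1|\to\infty$ with $\alpha_2\in\S$ fixed. Using the far-field bounds of Lemmas \ref{lem:Psi++Asympt} and \ref{lem:PhiAsympt.}, the functions $\Psi_{++}$, $\Phi_{-\circ}=\Phi_{-+}+\Phi_{--}$ and $\Phi_{+-}$ all decay like $1/|\alpha_1|$; since $K_{\pm\circ}\to 1$ as $|\alpha_1|\to\infty$, the products $K_{+\circ}\Psi_{++}$ and $\Phi_{-\circ}/K_{-\circ}$ inherit this decay, and the terms $P_{++}/K_{-\circ}(a_1,\alpha_2)$ and $P_{++}\big(1/K_{-\circ}-1/K_{-\circ}(a_1,\alpha_2)\big)$ decay like $1/|\alpha_1|$ because $P_{++}=\mathcal{O}(1/|\alpha_1|)$ and the remaining factors are bounded. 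The delicate terms are the Cauchy sum-split pieces $[\Phi_{+-}/K_{-\circ}]_{\pm\circ}$: since the integrand $\Phi_{+-}/K_{-\circ}=\mathcal{O}(1/|\alpha_1|)$ in the strip, Theorem \ref{thm:GrowthEstimateSumSplit}(2) only guarantees decay like $\ln|\alpha_1|/|\alpha_1|$ for these components, but this is still genuine decay, so $E_1\to 0$.

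With entire-ness and decay in hand, the conclusion is immediate: for each fixed $\alpha_2\in\S$ the map $\alpha_1\mapsto E_1(\alpha_1,\alpha_2)$ is entire and tends to zero at infinity, hence bounded, so Liouville forces it to vanish identically; as this holds for every $\alpha_2\in\S$, we obtain $E_1\equiv 0$ on $\mathbb{C}\times\S$. I expect the main obstacle to be the decay bookkeeping of the third step, in particular justifying the far-field asymptotics of the spectral unknowns (the content of Appendix \ref{sec.LiouvilleApplication}) and verifying that the logarithmically-corrected bound of Theorem \ref{thm:GrowthEstimateSumSplit} for the bracket terms still suffices for Liouville; everything else is a routine matching and pole-tracking exercise.
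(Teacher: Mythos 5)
Your proposal is correct and follows essentially the same route as the paper: fix $\alpha_2\in\S$, glue the two branches into a function entire in $\alpha_1$, show each term tends to zero at infinity (with the bracket terms handled via the $\ln|\alpha_1|/|\alpha_1|$ bound of Theorem \ref{thm:GrowthEstimateSumSplit} applied to the estimates of Lemmas \ref{lem:Psi++Asympt} and \ref{lem:PhiAsympt.}), and conclude by Liouville. The paper's own proof is just a terser version of this same bookkeeping.
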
 

\medskip 
\begin{proof}
Let us fix some \RED{$\alpha_2=\alpha^*_2 \in \S$.} \RED{By Lemma \ref{lemma:KpmcircDomain} we know $K_{+\circ} \to 1$, as  $|\alpha_1| \to \infty$ in UHP, and by definition of $P_{++}$ (c.f. \eqref{eq.KPDef})} it is clear that 	
\begin{align*}
	P_{++} \to 0, \ \text{as} \ |\alpha_1| \to \infty, \ \text{in} \ \mathrm{UHP}.
\end{align*}
But due to Lemma\RED{s} \ref{lem:Psi++Asympt}, \ref{lem:PhiAsympt.} and Theorem \ref{thm:GrowthEstimateSumSplit} we know that
$[\Phi_{+-}/K_{-\circ}]_{\pm\circ}$ decays at least like $\ln|\alpha_1|/|\alpha_1|$ as $|\alpha_1| \to \infty$ in $\UHP$ (resp. $\LHP$). So we know that 
$E \to 0$ as $|\alpha_1| \to \infty$ in $\UHP$. Similarly, we find $E \to 0$ as $|\alpha_1| \to \infty$ in $\LHP$ and therefore, \RED{since $\UHP\cap \LHP = \S$ is not empty (c.f. Section \ref{sec:SetNotations}),} by Liouville's theorem applied in the $\alpha_1$ plane, $E_1 \equiv 0$. 
\end{proof}
\subsection{Application in the $\alpha_2$ plane}

\begin{lemma}\label{Lemma:LiouvilleAlpha2}
The function $E_2(\alpha_1,\alpha_2)$ given by 
{\fontsize{7.85pt}{0pt}\selectfont
	\begin{align}
		\nonumber
		E_{2}=  \begin{cases}
			-\Psi_{++}K_{++}K_{-+}(\RED{\mathfrak{a}_1},\alpha_2)  - \frac{P_{++}}{K_{--}(\RED{\mathfrak{a}_1},\RED{\mathfrak{a}_2})K_{+-}(\alpha_1,\RED{\mathfrak{a}_2})}
			-  \left[\frac{K_{-+}(\RED{\mathfrak{a}_1},\alpha_2)}{K_{+-}} \left[\frac{\Phi_{+-}}{K_{-\circ}}\right]_{+\circ}\right]_{\circ +}\!\!, \ \boldsymbol{\alpha} \in \D_{++} \\
			P_{++}\left(
			\frac{1}{K_{--}(\RED{\mathfrak{a}_1},\alpha_2) K_{+-}} \right. - \left. \frac{1}{K_{--}(\RED{\mathfrak{a}_1},\RED{\mathfrak{a}_2})K_{+-}(\alpha_1,\RED{\mathfrak{a}_2})}\right)
			+  \left[\frac{K_{-+}(\RED{\mathfrak{a}_1},\alpha_2)}{K_{+-}} \left[\frac{\Phi_{+-}}{K_{-\circ}}\right]_{+\circ}\right]_{\circ -}\!\!\!\!, \  \boldsymbol{\alpha} \in \D_{+-}. 
		\end{cases}
	\end{align}
}
\normalsize
vanishes i.e $E_2 \equiv 0$.
\end{lemma}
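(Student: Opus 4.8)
The plan is to mirror the argument of Lemma \ref{Lemma:LiouvilleAlpha1}, now carried out in the $\alpha_2$ plane with $\alpha_1$ held fixed in $\UHP$. First I would record that $E_2$ is, by its very construction from \eqref{eq.3.27}, well defined and analytic on $\UHP \times \mathbb{C}$: rearranging \eqref{eq.3.27} shows that its upper branch is analytic on $\UHP \times \UHP$ and its lower branch on $\UHP \times \LHP$, and the two branches agree on the overlap strip $\UHP \times \S$ precisely because they come from a single Cauchy sum-split (Theorem \ref{thm. Cauchy Sum-split}) in the $\alpha_2$ plane together with the pole-removal identities stated just before the lemma. Hence, for each fixed $\alpha_1^* \in \UHP$, the map $\alpha_2 \mapsto E_2(\alpha_1^*,\alpha_2)$ is entire.

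The core of the argument is to show $E_2(\alpha_1^*,\alpha_2) \to 0$ as $|\alpha_2| \to \infty$ within each half-plane, after which Liouville's theorem in the $\alpha_2$ plane forces $E_2 \equiv 0$. For the $\UHP$ branch I would treat the three terms separately. The factor $\Psi_{++}$ decays like $1/|\alpha_2|$ by \eqref{Psi++Behaviour1.2} of Lemma \ref{lem:Psi++Asympt}, while $K_{++} \to 1$ and $K_{-+}(a_1,\alpha_2) \to 1$ as $|\alpha_2| \to \infty$, so the first term decays like $1/|\alpha_2|$. The second term carries the explicit factor $P_{++}$, which decays like $1/|\alpha_2|$ since $\alpha_2$ enters only through $(\alpha_2 - a_2)^{-1}$, and its remaining $\alpha_2$-independent factors are bounded. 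For the $\LHP$ branch the $P_{++}$ term is handled identically, now using that $K_{--}(a_1,\alpha_2) \to 1$ and $K_{+-} \to 1$, so that the difference inside the parentheses stays bounded.

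The remaining, and main, obstacle is the decay of the nested bracket term $\left[\frac{K_{-+}(a_1,\alpha_2)}{K_{+-}}\left[\frac{\Phi_{+-}}{K_{-\circ}}\right]_{+\circ}\right]_{\circ\pm}$, which I would argue in two stages. First, by Lemma \ref{lem:PhiAsympt.} the quantity $\Phi_{+-}/K_{-\circ}$ inherits the decay $\mathcal{O}(1/|\alpha_1||\alpha_2|)$, and applying Theorem \ref{thm:GrowthEstimateSumSplit} to the inner split $[\,\cdot\,]_{+\circ}$, performed in $\alpha_1$ with $\alpha_2$ a spectator, preserves the $\mathcal{O}(1/|\alpha_2|)$ rate in the $\alpha_2$ direction since that decay passes through the Cauchy integral uniformly on the contour. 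Multiplying by $K_{-+}(a_1,\alpha_2)/K_{+-}$, both factors tending to $1$, keeps this $\mathcal{O}(1/|\alpha_2|)$ rate. Finally, applying Theorem \ref{thm:GrowthEstimateSumSplit} once more to the outer split $[\,\cdot\,]_{\circ\pm}$, now in $\alpha_2$, degrades the rate only to $\ln|\alpha_2|/|\alpha_2|$, which still vanishes. Thus every term tends to zero in its respective half-plane, $E_2$ is an entire function of $\alpha_2$ vanishing at infinity, and Liouville's theorem yields $E_2 \equiv 0$. The delicate point throughout is justifying that the single-variable estimates of Theorem \ref{thm:GrowthEstimateSumSplit} may be applied with the other variable frozen while keeping the decay uniform in $\alpha_1^*$, which is exactly the technical bookkeeping underpinning the appendix.
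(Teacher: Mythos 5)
Your argument is correct in outline and reaches the same conclusion by the same overall strategy (analyticity on $\UHP\times\mathbb{C}$, decay of every term as $|\alpha_2|\to\infty$, then Liouville in the $\alpha_2$ plane), but it diverges from the paper at the one genuinely delicate point, namely the bracket term. You estimate the inner quantity $\bigl[\Phi_{+-}/K_{-\circ}\bigr]_{+\circ}$ \emph{directly}, by pushing the $\mathcal{O}(1/|\alpha_2|)$ decay of $\Phi_{+-}/K_{-\circ}$ through the $\alpha_1$-Cauchy integral with $\alpha_2$ as a spectator; this forces you to assert (and, as you concede, not fully justify) that the decay in $\alpha_2$ is uniform in the integration variable along the contour. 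The paper avoids this entirely: it observes that the combination $\frac{K_{-+}(a_1,\alpha_2)}{K_{+-}}\bigl[\frac{\Phi_{+-}}{K_{-\circ}}\bigr]_{+\circ}$ is, by the already-established identity \eqref{eq.3.27}, equal to $-\Psi_{++}K_{++}K_{-+}(a_1,\alpha_2)-\frac{P_{++}}{K_{--}(a_1,\alpha_2)K_{+-}}$, whose decay as $|\alpha_2|\to\infty$ in $\S$ follows immediately from Lemma \ref{lem:Psi++Asympt} and the elementary behaviour of $P_{++}$ and the kernel factors; Theorem \ref{thm:GrowthEstimateSumSplit} then needs to be invoked only once, for the outer split $[\,\cdot\,]_{\circ\pm}$. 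Your route is workable (the uniformity can be extracted from the Abelian-theorem derivation of Lemma \ref{lem:PhiAsympt.}, whose implied constants are explicit) and has the merit of not relying on the functional equation itself, but the paper's trick of reading the strip decay off \eqref{eq.3.27} is cleaner and removes the bookkeeping you flag as the weak point. One further small remark: the paper also notes that the statements $K_{\pm\pm}\to 1$ are themselves obtained by applying Theorem \ref{thm:GrowthEstimateSumSplit} to $\mylog(K_{\pm\circ})$ and exponentiating, a step you take for granted.
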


\begin{proof} Applying Theorem \ref{thm:GrowthEstimateSumSplit} to $\mylog(K_{\pm \circ})$,  we find, after applying $\exp$, that $K_{--}$, $\ K_{-+},$ $\ K_{--},$ $\ K_{+-}$ all go to $1$ as $|\alpha_2| \to \infty$ in $\UHP$ and $\LHP$ respectively. Moreover, $P_{++} \to 0$ as $|\alpha_2| \to \infty$ in $\UHP$ and $\LHP$ respectively. Therefore, applying Theorem \ref{thm:GrowthEstimateSumSplit} to $\Psi_{++}$, $\Phi_{+-}$, $\Phi_{--}$, and $\Phi_{-+}$ (using the estimates given in Lemma\RED{s} \ref{lem:Psi++Asympt} and \ref{lem:PhiAsympt.}) we find that all terms except possibly the bracket terms in $E_2$ vanish as $|\alpha_2| \to \infty$. 
But we can use \eqref{eq.3.27} to directly obtain estimates for the behaviour of $\frac{K_{-+}(\RED{\mathfrak{a}_1},\alpha_2)}{K_{+-}}\left[\frac{\Phi_{+-}}{K_{-\circ}}\right]_{+\circ}$ as $|\alpha_2| \to \infty$ in $\S$ and thereafter apply Theorem \ref{thm:GrowthEstimateSumSplit} (finding that the bracket terms vanishes as $|\alpha_2| \to \infty$ in $\UHP$ and $\LHP$ respectively).  See \cite{AssierAbrahams1} for a more detailed discussion.	
\end{proof}

\bibliographystyle{plain}
\bibliography{bibliography}

\begin{thebibliography}{10}

\bibitem{AssierAbrahams2}
R.~C. Assier and I.~D. Abrahams.
\newblock On the asymptotic properties of a canonical diffraction integral.
\newblock {\em Proc. R. Soc. A}, 476:20200150, 2020.

\bibitem{AssierAbrahams1}
R.~C. Assier and I.~D. Abrahams.
\newblock A surprising observation on the quarter-plane diffraction problem.
\newblock {\em SIAM J. Appl. Math}, 81(1):60--90, 2021.

\bibitem{AssierShanin}
R.~C. Assier and A.~V. Shanin.
\newblock Diffraction by a quarter-plane. {A}nalytical continuation of spectral
  functions.
\newblock {\em Q . Jl Mech. Appl. Math}, 72(1), 2019.

\bibitem{AssierShanin3}
R.~C. Assier and A.~V. Shanin.
\newblock Analytical continuation of two-dimensional wave fields.
\newblock {\em Proc. Roy. Soc. A}, 477(2020081), 2021.

\bibitem{AssierShanin2}
R.~C. Assier and A.~V. Shanin.
\newblock Vertex {G}reen’s functions of a quarter-plane. links between the
  functional equation, additive crossing and {L}am\'e functions.
\newblock {\em Q.J. Mech. Appl. Math.}, 74(3), 2021.

\bibitem{Babich}
V.~M. Babich, M.~A. Lyalinov, and V.~E. Grikurov.
\newblock Diffraction theory: The {S}ommerfeld-{M}alyuzhinets technique (alpha
  science series on wave phenomena).
\newblock {\em Oxford: Alpha Science}, 2007.

\bibitem{BabichMokeeva}
V.~M. Babich and N.~V. Mokeeva.
\newblock Scattering of the plane wave by a transparent wedge.
\newblock {\em J. Math. Sci.}, 155(3):335–342, 2008.

\bibitem{BabichEtAl.}
V.~M. Babich, N.~V. Mokeeva, and B.~A. Samokish.
\newblock The problem of scattering of a plane wave by a transparent wedge: A
  computational approach.
\newblock {\em Commun.Technol. Electron.}, 57(9):993–1000, 2012.

\bibitem{Baran}
A.~J. Baran.
\newblock {\em Light Scattering by Irregular Particles in the Earth’s
  Atmosphere}, volume~8 of {\em Light Scattering Reviews}.
\newblock Berlin, Heidelberg: Springer, 2013.

\bibitem{BelinskiyEtAl1973}
B.P. Belinskiy, D.P. Kouzov, and V.D. Cheltsova.
\newblock On acoustic wave diﬀraction by plates connected at a right angle.
\newblock {\em J. of Applied Math. and Mechanics}, 37(2):273--281, 1973.

\bibitem{Budaev1}
B.~V. Budaev and D.~B. Bogy.
\newblock Rayleigh wave scattering by two adhering elastic wedges.
\newblock {\em Proc. R. Soc. A Math. Phys. Eng. Sci.}, 454(1979):2949–2996,
  1998.

\bibitem{Budaev2}
B.~V. Budaev and D.~B. Bogy.
\newblock Rigorous solutions of acoustic wave diffraction by penetrable wedges.
\newblock {\em J. Acoust. Soc. Am.}, 105(1):74–83, 1999.

\bibitem{CroisilleLebeau}
J.~P. Croisille and G.~Lebeau.
\newblock {\em Diffraction by an Immersed Elastic Wedge}.
\newblock Springer-Verlag, Berlin, Heidelberg, 1999.

\bibitem{DanieleLombardi}
V.~Daniele and G.~Lombardi.
\newblock The {W}iener-{H}opf solution of the isotropic penetrable wedge
  problem: Diffraction and total field.
\newblock {\em IEEE Transactions on Antennas and Propagation}, 59, 2011.

\bibitem{Doetsch}
G.~Doetsch.
\newblock {\em Introduction to the Theory and Application of the Laplace
  Transformation}.
\newblock Springer-Verlag Berlin Heidelberg New York, 1974.

\bibitem{GrothEtAl.1}
S.~P. Groth, D.~P. Hewett, and S.~Langdon.
\newblock Hybrid numerical-asymptotic approximation for high-frequency
  scattering by penetrable convex polygons.
\newblock {\em IMA J. Appl. Math.}, 80, 2015.

\bibitem{GrothEtAl.2}
S.~P. Groth, D.~P. Hewett, and S.~Langdon.
\newblock A high frequency boundary element method for scattering by penetrable
  convex polygons.
\newblock {\em Wave Motion}, 78, 2018.

\bibitem{Jones1964}
D.~S. Jones.
\newblock {\em The Theory of Electromagnetism}.
\newblock Elsevier Ltd., 1964.

\bibitem{Keller}
J.~B. Keller.
\newblock Geometrical theory of diffraction.
\newblock {\em Journal of the Optical Society of America}, 52, 1962.

\bibitem{KinslerEtAl}
L.~E. Kinsler, A.~R. Frey, A.~B. Coppens, and J.~V. Sanders.
\newblock {\em Fundamentals of Acoustics Fourth Edition}.
\newblock John Wiley and Sons, Inc., 1999.

\bibitem{KontorovichLebedev}
M.~J. Kontorovich and N.~N. Lebedev.
\newblock On a method of solution of some problems of the diffraction theory.
\newblock {\em J. Phys. (Academy Sci. U.S.S.R.)}, 1, 1939.

\bibitem{KrautLehmann}
E.~A. Kraut and G.~W. Lehmann.
\newblock Diffraction of electromagnetic waves by a right-angle dielectric
  wedge.
\newblock {\em Journal of Mathematical Physics}, 10, 1969.

\bibitem{LawrieAbrahams}
J.~B. Lawrie and I.~D. Abrahams.
\newblock A brief historical perspective of the {W}iener-{H}opf technique.
\newblock {\em J. Eng. Math.}, 59, 2007.

\bibitem{Lyalinov}
M.~A. Lyalinov.
\newblock Diffraction by a highly contrast transparent wedge.
\newblock {\em J. Phys. A. Math. Gen.}, 32, 1999.

\bibitem{Malyuzhinets}
G.~D. Malyuzhinets.
\newblock Excitation, reflection and emission of surface waves from a wedge
  with given face impedances.
\newblock {\em Sov. Phys. Dokl.}, 3, 1958.

\bibitem{Meister}
E.~Meister.
\newblock Some solved and unsolved canonical problems of diffraction theory.
\newblock {\em Lect. Notes Math.}, 1285, 1987.

\bibitem{Nethercote1}
M.~A. Nethercote, R.~C. Assier, and I.~D. Abrahams.
\newblock Analytical methods for perfect wedge diffraction: a review.
\newblock {\em Wave Motion}, 93, 2020.

\bibitem{Nethercote2}
M.~A. Nethercote, R.~C. Assier, and I.~D. Abrahams.
\newblock High-contrast approximation for penetrable wedge diffraction.
\newblock {\em IMA J. Appl. Math.}, 85(3):421--466, 2020.

\bibitem{Noble}
B.~Noble.
\newblock {\em Methods Based on the {W}iener-{H}opf Technique}.
\newblock Pergamon Press London, Neq York, Paris, Los Angeles, 1958.

\bibitem{Radlow1}
J.~Radlow.
\newblock Diffraction by a quarter-plane.
\newblock {\em Arch. Ration. Mech. Anal.}, 8, 1961.

\bibitem{Radlow2}
J.~Radlow.
\newblock Diffraction by a right-angled dielectric wedge.
\newblock {\em ht. J. Engng. Sei.}, 2, 1964.

\bibitem{Rawlins1977}
A.~D. Rawlins.
\newblock Diffraction by a dielectric wedge.
\newblock {\em J. Inst. Maths Applics}, 18:231--279, 1977.

\bibitem{Rawlins}
A.~D. Rawlins.
\newblock Diffraction by, or diffusion into, a penetrable wedge.
\newblock {\em Proc. R. Soc. A Math. Phys. Eng. Sci.}, 455, 1999.

\bibitem{SmithEtAl.}
H.~R. Smith, A.~Webb, P.~Connolly, and A.~J. Baran.
\newblock Cloud chamber laboratory investigations into the scattering
  properties of hollow ice particles.
\newblock {\em J. Quant. Spectrosc. Radiat. Transf.}, 157:106–118, 2015.

\bibitem{Sommerfeld1}
A.~Sommerfeld.
\newblock Mathematische {T}heorie der {D}iffraction.
\newblock {\em Mathematische Annalen}, 47(2-3):317–374, 1896.

\bibitem{Sommerfeld3}
A.~Sommerfeld.
\newblock Theoretisches \"uber die {B}eugung der {R}\"ontgenstrahlen.
\newblock {\em Zeitschrift f\"ur Mathematik und Physik}, 46, 1901.

\bibitem{Sommerfeld2}
A.~Sommerfeld, R.~J. Nagem, M.~Zampolli, and G.~Sandri.
\newblock {\em Mathematical Theory of Diffraction}.
\newblock (Progress in Mathematical Physics). Birkh\"auser, 2004.

\bibitem{Wegert}
E.~Wegert.
\newblock {\em Visual Complex Functions \ an introduction with phase
  portraits}.
\newblock Birkh\"auser Verlag, 2012.

\bibitem{Woolcock}
W.~S. Woolcock.
\newblock Asymptotic behavior of {S}tieltjes transforms.
\newblock {\em I. Journal of Mathematical Physics}, 8(6):1270–1275, 1967.

\end{thebibliography}

\end{document}